\documentclass{llncs}
\usepackage{graphicx}
\usepackage{multirow}
\usepackage{bigstrut}
\usepackage{amssymb}
\usepackage{amsfonts}
\usepackage{mathtools,thm-restate}
\usepackage{hyperref}

\newcommand{\ignore}[1]{}

\everymath{\displaystyle}


\title{Incentive Compatible Two Player Cake Cutting}
\author{Avishay Maya\thanks{
Hebrew University of Jerusalem, School of Computer Science and
Engineering, avishay.maya@mail.huji.ac.il.  Research partially supported by a grant from the Israeli Science Foundation
and by a Google grant on Electronic Markets and Auctions.} \and Noam Nisan\thanks{
Microsoft Research and Hebrew University of Jerusalem, noam@cs.huji.ac.il. 
Research at the Hebrew University partially supported by a grant from the Israeli Science Foundation
and by a Google grant on Electronic Markets and Auctions.}}
\institute{}

\pagestyle{plain}


\begin{document}
\maketitle
\begin{abstract}
We characterize methods of dividing a cake between two bidders in a way that is
incentive-compatible and Pareto-efficient.
In our cake cutting model, each
bidder desires a subset of the cake (with a uniform value over this subset),
and is allocated some subset.
Our characterization proceeds via reducing to a simple one-dimensional version
of the problem, and yields, for example, a tight bound on the social
welfare achievable.
\end{abstract}


\section{Introduction}
The question of allocating resources
among multiple people is one
of the most basic questions that humans have been studying.  At this level of
generality one may say that most of the economic theory is devoted to this
problem, as well as other fields of study.
One class of scenarios of this form, with an enormous amount of literature, goes
by the name of ``cake cutting''.
In this type of scenario the goods are modeled as the (infinitely divisible)
unit interval (the cake), the preferences as (measurable) valuation functions
on the cake and the allocation as a partition of the cake. Many variants of
this model have been considered and the usual goals are various notions of
fairness and efficiency. See, e.g. \cite{FairDivision} for an introduction.

Recently the research community has started looking at such models from a
mechanism design point of view, i.e., considering the incentives of the players.
From this perspective, players act {\em rationally} to maximize their utility
and will thus ``tell'' the cake cutting algorithm whatever will make it maximize
their own piece's value. In the simplest form\footnote{Which by the revelation
mechanisms is really without loss of generality since an arbitrary one, when analyzed at equilibrium, may be
converted to an incentive-compatible one where truth is an equilibrium.} we would ask for an``incentive
compatible'' (equivalently, truthful or strategy-proof) cake cutting allocation mechanism where each bidder
always maximizes his utility by reporting his true valuation.


Several recent papers have designed incentive-compatible cake cutting
mechanisms. For example, in \cite{TruthJustice} an incentive-compatible,
envy-free, Pareto-efficient, and proportional cake cutting mechanism is obtained
for the model where player valuations are ``uniform'': each player $i$
desires a subset $S_i$ of the items, and has a uniform value over this subset with the
total value of each player normalized to 1.\footnote{For purposes of efficient
computation, it is also required that the sets would be given as a finite
collection of intervals.}  In \cite{MosselTamuz}, an incentive-compatible, proportional, and
Pareto-efficient mechanism is constructed for the case of arbitrary (not
necessarily uniform) preferences.
A ``randomized'' cake cutting mechanism that is truthful in
expectation with better guarantees is also provided in that paper.

In this paper we seek to characterize incentive-compatible cake cutting
mechanisms, and show bounds on possible performance measures.
As our model has no ``money'' (i.e. no transferable utilities) the standard
tools of mechanism design with quasi-linear utilities (such as
Vickrey-Clarke-Groves \cite{Vic61,Cla71,Gro73} or Myerson \cite{Mye81}) do
not apply. In this sense our work lies within the framework of approximate mechanism design without money, advocated, e.g., by \cite{AFPT11,PT09}. 
As opposed to most of the cake cutting literature, we focus solely on incentive
compatibility and efficiency and do not consider notions of fairness.  As our results are
mostly ``negative'', this only strengthens them.  We should mention that the
positive results that we provide, i.e. the mechanisms that have the ``best''
properties among all incentive-compatible ones, turn out to also be envy-free.

Our general model, following that of \cite{TruthJustice}, considers an
infinitely-divisible atom-less cake and considers only the restricted class of
uniform player valuations.\footnote{Again, as our results are mostly ``negative'' this limited setting
strengthens them.} Formally, the ``cake'' is modeled as the real interval
$[0,1]$, each player desires a (measurable) set $A \subseteq [0,1]$ and his valuation is
uniform over that set (and normalized to 1):  $V_A(S) = |S \cap A|/|A|$, where
$| \cdot |$ specifies the usual Lebesgue measure.
We restrict ourselves to ``non-wasteful'' mechanisms, where no piece that is
desired by some player may be left unallocated and no piece is allocated to a
player that does not want it (this is essentially equivalent to Pareto-efficiency of the
outcome.\footnote{The inessential technical difference is
detailed in the next section.  While it does not {\em seem} that leaving pieces of the cake
unallocated can be useful, whether this is really the case remains open.}) We
restrict ourselves to the case of two players.
Thus a non-wasteful mechanism accepts as input the sets $A$ and $B$ desired by
the two players and returns two disjoint (measurable) sets $C = C(A,B) \subseteq
A$ and $D = D(A,B) \subseteq B$.  In this case, the first player's utility is
given by $V_A(C)=|C|/|A|$ and the second's by $V_B(D)=|D|/|B|$.  A mechanism is
called ``incentive-compatible'' if for every $A,B$ and $A'$ we get that
$V_A(C(A,B)) \ge V_A(C(A',B))$ and similarly for the second player.

As a tool for studying this model, we introduce a simple, one-dimensional ``aligned'' model.
In the aligned model we first restrict the possible player valuations:
the first player desires the sub-interval $A=[0,a]$ and the second player
desires the sub-interval $B=[1-b,1]$. This is interesting when $1-b < a$ in which case
the question is how to allocate the overlap $[1-b,a]$ between the players.  We
then also restrict the allowed allocation by the mechanism: the first player
must be allocated an interval $C=[0,c]$ and the second an interval $D=[1-d,1]$.
Thus, in the aligned model the input is fully specified by its lengths $a$,$b$,
the output by its lengths $c$,$d$, and a mechanism is a pair of real valued
functions $f=(c(a,b),d(a,b))$.
It turns out that these two restrictions
offset each other in some sense, allowing us to convert mechanisms between the
two models.  As the aligned model is really single-dimensional, we are able to fully characterize
incentive-compatible mechanisms in it, a characterization that then has strong
implications in the general model as well.


\begin{theorem} (Characterization of Aligned Model)
A non-wasteful deterministic mechanism for
two-players in the aligned model
is incentive-compatible if and only if
it is from the following family, characterized by $0\le \theta \le 1$:
the allocation gives the first player the interval
$\left[ 0,\min \left\{a,\max \left\{1-b,\theta\right\}\right\}\right]$
while the second player gets the interval
$\left[ 1 - \min \left\{b,\max \left\{ 1-a, 1-\theta\right\}\right\}, 1\right]$.
\end{theorem}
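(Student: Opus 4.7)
The plan is to establish both directions separately. For the ``if'' direction, given $\theta \in [0,1]$ I would verify the mechanism is non-wasteful by a four-way case analysis on whether $a \ge \theta$ and whether $b \ge 1-\theta$: in each case where the cake has overlap ($a+b \ge 1$) the allocated interval endpoints both sit inside the desired sets and the two lengths sum to $1$. Incentive compatibility for player~1 then follows because $c(a,b)=\min\{a,\max\{1-b,\theta\}\}$ depends on the report $a$ only through the outer minimum: any $a'<a$ can only shrink the allocated interval, while any $a'\ge a$ yields exactly the truth-telling allocation. Player~2 is symmetric.

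For the ``only if'' direction, I would first use non-wastefulness in the aligned model to reduce any mechanism to a pair of real-valued functions $c(a,b)$, $d(a,b)$ satisfying $c\le a$, $d\le b$, the equalities $c=a$ and $d=b$ when $a+b\le 1$, and the constraint $c+d=1$ when $a+b\ge 1$. Fixing $b$, I would extract the shape of $a\mapsto c(a,b)$ from player~1's IC, rewritten as $c(a,b)\ge \min\{a,c(a',b)\}$ for every $a,a'$. Specializing to $a'<a$ (and using $c(a',b)\le a'<a$) gives monotonicity. Specializing to $a'>a$ splits in two cases: if $c(a',b)>a$ then $c(a,b)=a$, and if $c(a',b)\le a$ then $c(a,b)\ge c(a',b)$, which together with monotonicity yields $c(a,b)=c(a',b)$. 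These conditions force $c(\cdot,b)$ to equal the identity on an initial segment and to be flat thereafter, so $c(a,b)=\min\{a,\theta_b\}$ for $\theta_b := \inf\{a : c(a,b)<a\}$, with $\theta_b\ge\max\{0,1-b\}$ by the non-wasteful equality on the no-overlap region. A symmetric argument gives $d(a,b)=\min\{b,\omega_a\}$ with $\omega_a\ge\max\{0,1-a\}$.

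The last step is to couple the two characterizations using the non-wasteful identity $c+d=1$ on the overlap. Plugging in $a=b=1$ gives $\theta_1+\omega_1=1$; writing $\theta := \theta_1$ so $\omega_1=1-\theta$, and then holding $a=1$ and varying $b$, the identity forces $\theta_b=\max\{\theta,1-b\}$. A symmetric argument in $a$ yields $\omega_a=\max\{1-\theta,1-a\}$. A routine case split on whether $a\ge\theta$ and whether $b\ge 1-\theta$ then verifies that these values simultaneously satisfy $\min\{a,\theta_b\}+\min\{b,\omega_a\}=1$ throughout the overlap region, so the two derived formulas are mutually consistent and recover exactly the family in the statement. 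The main obstacle is the structural extraction in the second paragraph: the content is elementary, but the two-way specialization $a' \lessgtr a$ is what simultaneously locks in monotonicity, the diagonal portion, and the flat plateau of $c(\cdot,b)$.
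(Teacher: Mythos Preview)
Your proposal is correct and follows essentially the same architecture as the paper's proof: fix $b$, use incentive compatibility to show $c(\cdot,b)=\min\{a,\theta_b\}$ for some threshold, do the symmetric thing for $d$, and then couple the thresholds via $c+d=1$ on the overlap region, evaluating at $a=1$ (and $b=1$) to extract $\theta_b=\max\{1-b,\theta\}$.

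The one noteworthy technical difference is in how the shape of $c(\cdot,b)$ is pinned down. The paper first proves $c_b$ is monotone and $1$-Lipschitz, uses continuity to guarantee the maximum $\mu(b)=\max_a c_b(a)$ is attained at some $a_m$, and then argues separately on $[0,a_m)$ and $[a_m,1]$. Your route is more direct: you write IC as the single inequality $c(a,b)\ge\min\{a,c(a',b)\}$ and read off both monotonicity and the identity/plateau dichotomy by specializing $a'\lessgtr a$, never invoking continuity. This is a mild streamlining (and in fact your $\theta_b$ coincides with the paper's $\mu(b)=c(1,b)$), but the underlying idea is identical.
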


This characterization holds regardless of any issues of fairness, and the only
mechanism in this family that is fair in any sense is that with
$\theta=\frac{1}{2}$ which gives envy-freeness and turns out to be equivalent
to the mechanism of \cite{TruthJustice} for the case of two players.
This tight characterization in the aligned model allows the calculation of the
best achievable results -- under any desired performance measure --
for incentive-compatible mechanisms. Specifically,
we are interested in performance measures that depend on
relative lengths of demands and allocations, formally on the set
of 4-tuples $(\alpha,\beta,\gamma,\delta)$ where $\alpha=|A|/|A \cup
B|$, $\beta=|B|/|A \cup B|$, $\gamma=|C|/|A \cup B|$, and $\delta=|D|/|A \cup
B|$.\footnote{Note that as $|A \cap B|/|A \cup B|=\alpha+\beta-1$, $C
\subseteq A$, $D \subseteq B$, $C \cap D = \emptyset$, and $A\cup B=C\cup D$ we
have all the information regarding the sizes in the Venn diagram.}
A typical performance measure of this form is the
competitive ratio for social welfare: the worst case ratio between the social welfare achieved by the
mechanism (which is $\gamma/\alpha+\delta/\beta$) and that achieved at the optimal allocation (which turns out to be
$1+(1-\min\{\alpha,\beta\})/\max\{\alpha,\beta\}$).
Many other variants can be considered, such as looking at other aggregations of
the two players' utility (e.g. $\min\{\gamma/\alpha,\delta/\beta\}$ or
$\log(\gamma/\alpha)+\log(\delta/\beta)$), assigning different weights to the
different players, using a different comparison benchmark (e.g. the one
splitting the intersection equally), using additive regret rather than
multiplicative ratio, etc.

We prove the following reductions, which preserve the 4-tuples of
ratios $(\alpha,\beta,\gamma,\delta)$, between these models.

\begin{theorem} (Reduction Between Models)
\begin{enumerate}
	\item Let $f=(c(a,b),d(a,b))$ be an incentive-compatible and non-wasteful
	mechanism in the aligned model. There exists an incentive-compatible and
	non-wasteful mechanism $F=(C(A,B),D(A,B))$ in the general model such that for
	all $A,B$: $|C(A,B)|/|A \cup B| = c(a,b)$ and  $|D(A,B)|/|A \cup B|=d(a,b)$
	where $a=|A|/|A \cup B|$ and $b=|B|/|A \cup B|$.

	\item Let $F=(C(A,B),D(A,B))$ be an incentive-compatible and non-wasteful
	mechanism in the general model. There exists an incentive-compatible and
	non-wasteful mechanism $f=(c(a,b),d(a,b))$ in the aligned model such that for
	all $a,b$ there exist $A,B$ such that $|A|=a$, $|B|=b$, $c(a,b)=|C(A,B)|$ and
	$d(a,b)=|D(A,B)|$, and furthermore whenever $a+b \ge 1$ we have that $A \cup
	B=[0,1]$.
\end{enumerate}
\end{theorem}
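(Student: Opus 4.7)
The plan is to prove the two reductions separately, each by an explicit construction followed by verification of incentive compatibility.

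For Part 1, I start from an incentive-compatible aligned mechanism $f$, which by Theorem 1 is characterized by a parameter $\theta \in [0,1]$. I will define the general mechanism $F$ on inputs $A, B$ by forcing the non-overlap parts to their owners (player 1 receives $A \setminus B$ and player 2 receives $B \setminus A$, both required by non-wastefulness), and then splitting the overlap $A \cap B$ canonically: player 1 is allocated the measure-theoretic leftmost portion of $A \cap B$ whose size is chosen so that $|C(A,B)|$ equals $\min\{|A|, \max\{|A\setminus B|, \theta|A\cup B|\}\}$. By construction $|C(A,B)|/|A\cup B| = c(a,b)$ and symmetrically for $D$, and non-wastefulness is immediate. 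To verify incentive compatibility, I would analyze a player-1 misreport $A'$ case-by-case according to which regime of the $\min/\max$ formula applies to each of $|C(A,B)|$ and $|C(A',B)|$; the bound $|C(A',B)\cap A| \le |C(A,B)|$ then reduces to elementary inequalities on Lebesgue measures of Boolean combinations of $A, A', B$.

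For Part 2, I start from an incentive-compatible general mechanism $F$ and define the aligned mechanism by evaluating $F$ on the canonical aligned inputs: $c(a,b) := |C([0,a], [1-b,1])|$ and $d(a,b) := |D([0,a], [1-b,1])|$. With these canonical choices, $A = [0,a]$ and $B = [1-b,1]$ automatically satisfy $|A|=a$, $|B|=b$, and $A\cup B = [0,1]$ whenever $a + b \ge 1$, and non-wastefulness for $f$ follows from that of $F$. The content is verifying aligned IC for $f$. Applying player 1's general IC to nested half-interval misreports $A' = [0,a']$ immediately yields that $c(\cdot, b)$ is monotone non-decreasing and $1$-Lipschitz, and the symmetric argument for player 2 handles $d$. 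To upgrade these to the rigid aligned form (either $c(a,b)=a$ or $c(\cdot,b)$ is constant to the right of $a$), I would exploit non-wastefulness: whenever $c(a,b) < a$, the set $[0,a]\setminus C([0,a],B)$ has positive measure and lies in $[1-b,a]$, and combining general IC applied to misreports that swap this ``unused'' mass with portions of $(a, a']$, together with player 2's IC on the corresponding modifications of $B$, forces the constancy $c(a',b)=c(a,b)$ for all $a' > a$.

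The main obstacle is this rigidity step in Part 2. General IC offers constraints for arbitrary measurable misreports, whereas nested half-interval misreports alone yield only the weaker monotonicity and Lipschitz properties; pinning down the aligned form requires identifying the precise misreport that witnesses each potential violation, and verifying it via a chain that combines both players' IC with non-wastefulness. In Part 1, the secondary difficulty is that a misreport $A'$ typically changes $|A'\cup B|$ and hence the scale on which $\theta$ operates, so IC cannot be inherited coordinate-wise from the aligned model; the explicit closed form of $c(a,b)$ from Theorem 1 is what makes the case analysis tractable.
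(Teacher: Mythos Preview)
Your Part~1 is essentially the paper's construction: invoke the characterization $f=f_\theta$, give each player his exclusive part $A\setminus B$ (resp.\ $B\setminus A$), and top up from the overlap so that $|C(A,B)|=\min\{|A|,\max\{|A\setminus B|,\theta\,|A\cup B|\}\}$. The paper verifies IC by decomposing the symmetric difference $A\triangle A'$ into four pieces $\Delta_1,\ldots,\Delta_4$ (according to membership in $A$ vs.\ $A'$ and in $B$ vs.\ $\bar B$) and showing that zeroing each in turn can only help; your regime-by-regime case analysis is an alternative bookkeeping for the same inequalities.

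Part~2 is where your plan genuinely diverges from the paper, and the ``rigidity step'' is a real gap rather than a technicality. The paper does \emph{not} define $f$ by evaluating $F$ on the canonical inputs $[0,a],[1-b,1]$. It evaluates $F$ once, at $([0,1],[0,1])$, sets $\theta:=|\tilde C|$, and \emph{declares} $f:=f_\theta$; aligned IC and non-wastefulness then come for free from the characterization. All the work goes into building, for each $(a,b)$, witnesses $A,B$ with the stated sizes --- and these witnesses are in general \emph{not} $[0,a]$ and $[1-b,1]$. For instance, when $a>|\tilde C|$ the paper chooses a set $A_1$ of measure $a$ with $A_1\supseteq\tilde C$, so that the two-sided IC comparison (truthful $[0,1]$ deviating to $A_1$ and vice versa) sandwiches $|C(A_1,[0,1])|$ to exactly $|\tilde C|$; a second step then picks $B_2\supseteq D_1$. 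When $a\le|\tilde C|$ one instead takes $A_1\subseteq\tilde C$.

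Your route inverts the burden: the witnesses $[0,a],[1-b,1]$ are trivial, but you must show the resulting $f$ is aligned-IC, i.e.\ that $c(a,b)<a$ forces $c(a',b)=c(a,b)$ for all $a'>a$. Nested-interval deviations only give monotonicity and $1$-Lipschitzness, because general IC yields
\[
c(a,b)\;\ge\;\bigl|C([0,a'],B)\cap[0,a]\bigr|,
\]
and the right side can be strictly smaller than $\min\{a,c(a',b)\}$ whenever $C([0,a'],B)$ is not the left-aligned interval $[0,c(a',b)]$. Nothing in the hypotheses forces $F$ to produce left-aligned outputs on aligned inputs, so the inequality you actually need does not follow. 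Your proposed fix (swap the ``unused'' mass $U=[0,a]\setminus C([0,a],B)$ with mass in $(a,a']$, then invoke player~2's IC) does not close this: such swaps yield sets $A''$ on which you have no control over where $C(A'',B)$ sits inside $A''$, and player~2's IC relates different $B$'s with the \emph{same} $A$, so it does not link $c(a,b)$ to $c(a',b)$. In effect you would be proving the stronger statement that $|C([0,a],[1-b,1])|=\min\{a,\max\{1-b,|\tilde C|\}\}$ for \emph{every} IC non-wasteful $F$ --- something the paper neither proves nor needs, precisely because its non-canonical witnesses (chosen to contain, or be contained in, $\tilde C$) make the IC sandwich tight.
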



These two reductions imply that while the general
model may be (and actually is) richer, this richness cannot buy anything in
terms of performance -- for any notion of performance that depends on relative
lengths of bids and allocations. For every mechanism with a certain performance
level in the general model there exists a mechanism with the same performance
level in the aligned model and vice-versa. 

Thus our characterization in the aligned model
implies the same bounds on performance 
in the general model as well.  For example, in the aligned model, 
one may easily calculate that
at most a fraction of $(8-4\sqrt{3})^{-1}\approx 0.93$ of social welfare can
be extracted by any mechanism in the characterized family, and this competitive ratio is
in fact obtained by the envy-free mechanism with $\theta=\frac{1}{2}$.  The reductions
imply that this same bound also applies to mechanisms in the general model.  This
ratio may thus be termed ``the price of truthfulness'' in this setting.  A
complementary result appears in \cite{CKKK09}, where the ``price
of fairness'' is studied, comparing envy-free allocations to general ones, and
obtaining the same numeric bound on the fraction of the optimal welfare that can
be extracted by any {\em envy-free} allocation.
Our results do not require any notion of fairness, but instead show
that incentive-compatibility by itself implies this bound.
In fact, for the special case of social welfare we also provide a direct proof for this 
bound, a proof that also applies to {\em randomized} mechanisms.

\begin{theorem} (Price of Truthfulness)
Any deterministic or randomized incentive-compatible mechanism for
cake cutting for two-players in the general model, achieves at most a
$(8-4\sqrt{3})^{-1}\approx 0.93$ fraction of the optimal welfare for some
player valuations.
\end{theorem}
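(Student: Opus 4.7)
The plan is to establish the bound in two parts. For the deterministic case, the statement is immediate from the preceding structural results: by Theorem~2 it suffices to prove the bound in the aligned model, and by Theorem~1 every incentive-compatible non-wasteful aligned mechanism is a member of the one-parameter $\theta$-family. For each $\theta\in[0,1]$ I would write the competitive ratio as an explicit piecewise-rational function of $(a,b)$ and minimize over the two variables; I expect the best parameter to be $\theta=\tfrac{1}{2}$, with worst-case input $(a,b)=(\sqrt{3}-1,\,1)$ yielding the ratio $(8-4\sqrt{3})^{-1}=(2+\sqrt{3})/4$.

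For the randomized case I would give a direct three-input argument that does not invoke Theorem~1 and therefore applies to any randomized truthful mechanism. Fix $s\in(\tfrac{1}{2},1)$ and consider the aligned inputs $P_1=([0,s],[0,1])$, $P_2=([0,1],[1-s,1])$, and the pivot $P^{*}=([0,s],[1-s,1])$. Let $x$, $y$, and $c^{*}$ denote the expected lengths of player~1's allocation on $P_1$, $P_2$, and $P^{*}$ respectively. By non-wastefulness $|C|+|D|=1$ on each of these inputs, so the expected welfares and the optima take the form
\[
W(P_1) = 1 + \frac{x(1-s)}{s},\qquad W(P_2) = \frac{1}{s} - \frac{y(1-s)}{s},\qquad \mathrm{OPT}(P_1)=\mathrm{OPT}(P_2)=2-s.
\]
Two IC deviations couple the allocations through $P^{*}$: player~2's deviation from $P_1$ to $P^{*}$ (shrinking her reported set to $[1-s,1]$) forces $x\le c^{*}$, while player~1's deviation from $P_2$ to $P^{*}$ (shrinking his reported set to $[0,s]$) forces $y\ge c^{*}$. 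Hence $x\le y$.

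Suppose for contradiction that the mechanism achieves ratio at least $\rho$ on both $P_1$ and $P_2$: the first welfare inequality yields a lower bound on $x$, the second an upper bound on $y$, and combining with $x\le y$ forces $\rho\le(1+s)/\bigl(2s(2-s)\bigr)$. A one-dimensional minimization over $s\in(\tfrac{1}{2},1)$ produces the infimum $(2+\sqrt{3})/4$ at $s=\sqrt{3}-1$, matching the deterministic bound. The main obstacle is identifying the correct pivot $P^{*}$ so that the two IC constraints chain in the direction that produces $x\le y$; alternative pairings of inputs either fail to couple $x$ and $y$ or leave a multiplicative slack that loses the tight constant.
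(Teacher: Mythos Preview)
Your randomized argument is correct and yields the right constant, but it is not the route the paper takes, and it is slightly weaker in scope.

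\medskip
\textbf{How the paper proceeds.} The paper gives a single direct argument that simultaneously handles the deterministic and randomized cases (so it does not go through Theorems~1 and~2 at all for this bound). The pivot is the full-demand profile $([0,1],[0,1])$: let $p$ be the expected size player~I receives there, and by symmetry assume $p\le \tfrac{1}{2}$. Then consider the single alternative input $([0,1-\tau],[0,1])$. Player~I's incentive constraint (a player whose true set is $[0,1]$ shrinking to $[0,1-\tau]$) gives $p'\le p$, and together with $q'\le 1-p'$ (disjointness only) one gets
\[
\eta_F \;\le\; \frac{p/(1-\tau)+1-p}{1+\tau}\;\le\;\frac{\tfrac{1}{2}/(1-\tau)+\tfrac{1}{2}}{1+\tau},
\]
which is minimized at $\tau=2-\sqrt{3}$, giving $(8-4\sqrt{3})^{-1}$. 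This is a two-input, one-IC-constraint argument, with symmetry replacing your second IC chain.

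\medskip
\textbf{Comparison.} Your three-input scheme with pivot $P^{*}=([0,s],[1-s,1])$ is a legitimate alternative: you trade the symmetry reduction $p\le\tfrac12$ for an extra IC constraint, and the minimization over $s$ reproduces the paper's minimization over $\tau$ under $s=1-\tau$. The paper's version is shorter and, more importantly, does not invoke non-wastefulness anywhere: the only structural fact used is $|C|+|D|\le 1$, which follows from disjointness alone.

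\medskip
\textbf{A genuine gap.} Both halves of your plan assume non-wastefulness, but the theorem as stated (and as the paper emphasizes just before the proof) is meant to apply to arbitrary, possibly wasteful, mechanisms. For the deterministic part, Theorems~1 and~2 only characterize non-wasteful mechanisms, so appealing to them does not settle the general statement. For your randomized part, the step ``$x\le c^{*}$'' comes from player~2's IC between $P_1$ and $P^{*}$, which constrains $|D|$, and you convert this to a constraint on player~1's $|C|$ via $|C|+|D|=1$; without non-wastefulness that identity fails and the chain $x\le c^{*}\le y$ does not follow. The paper's argument avoids this by using only player~1's own IC constraint (so the bound is directly on $|C|$) together with $q'\le 1-p'$, which needs only disjointness. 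If you want to keep your symmetric three-input structure in full generality, you would need to replace the step that passes through player~2's IC by one that bounds player~1's allocation directly.
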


It should be noted that this is tight, as indeed the deterministic mechanism of
\cite{TruthJustice} achieves this ratio when restricted to two players.

The paper is structured as follows: in section 2 we present our two models,
the general one and the aligned one. Section 3 provides the characterization of
the aligned model, and section 4 shows the reductions between the models.  In
section 5 we provide a direct proof of the price of truthfulness result for a
randomized mechanism.


\section{Models}
\subsection{The General Model}







Our model has two players each desiring a measurable subset of $[0,1]$.  We will denote by $A \subseteq [0,1]$ the
set desired player I and by $B\subseteq [0,1]$ the set desired by the player II.  We view $A$ and $B$ as private information.
Everything else is common knowledge.  The players will be assigned disjoint measurable 
subsets, $C\subseteq [0,1]$ to player I and $D\subseteq [0,1]$ to player II.
We assume that player valuations are uniform over the subsets they desire and normalized to 1.

\begin{definition}
The \emph{valuation} of a player who desires subset $A \subseteq [0,1]$ for a subset $C\subseteq[0,1]$ 
is $V_A(C)=|C\cap A|/|A|$, where $| \cdot |$ specifies
the Lebesgue measure.
\end{definition}

\begin{definition} 
A \emph{mechanism} is a function which divides the cake between the two
players. The function receives as inputs two measurable subsets of $[0,1]$:
$A$ and $B$ (the demands of the players), and outputs two disjoint measurable
subsets of $[0,1]$, $C$ and $D$, where $C$ is the subset that player I receives
and $D$ is the subset that player II receives.

We denote a mechanism by $F(A,B)=(C(A,B),D(A,B))$, where $C(\cdot),D(\cdot)$ 
denote the functions that determine the allocations to the two players, respectively, and must satisfy
$C(A,B)\cap D(A,B) = \emptyset$ for all $A,B$.
\end{definition}

Our point of view is that the two players are strategic, aiming to maximize their valuation and since $A$ and $B$ are private information 
the players may ``lie'' to the mechanism regarding their real interest in
the cake if that may give them an
allocation with a higher valuation for them.

\begin{definition}
$F=(C(A,B),D(A,B))$ is called {\em incentive-compatible} if none of the players
can gain by declaring a subset which is different from the real subset he is
interested in.
Formally, for all $A,B,A'$: $V_A(C(A,B))\geq V_A(C(A',B))$ and similarly
for the second player: for all $A,B,B'$: $V_B(D(A,B))\geq V_B(D(A,B'))$.
\end{definition}

\begin{definition}
A mechanism $F=(C(A,B),D(A,B))$ is said to be {\em Pareto-efficient} if for
every input $A,B$ and the corresponding allocation made by the mechanism
$C(A,B),D(A,B)$, any other possible allocation $C',D'$ can not be strictly better for one
of the players and at least as good for the other.
\end{definition}

Note that two possible allocations $C,D$ and $C',D'$, which differ only in
the division of areas which none of the players is interested in, are equivalent
in the eyes of the players. Therefore, we would use a specific
Pareto-efficient allocation -- a non-wasteful allocation, in which
pieces of the cake that neither of the players demanded will not be allocated.

\begin{definition}
A mechanism $F=(C(A,B),D(A,B))$ is called {\em non-wasteful} if for every $A,B$ we have that 
$C(A,B)\subseteq A$, $D(A,B)\subseteq B$, and $C(A,B)\cup D(A,B)=A\cup B$.
\end{definition}

\begin{proposition}
Every non-wasteful mechanism is Pareto-efficient.  Every Pareto-efficient mechanism $F=(C(A,B),(D(A,B))$
can be converted to an equivalent non-wasteful one by defining $C'(A,B)=C(A,B)\cap A$ and $D'(A,B)=D(A,B) \cap B$.
\end{proposition}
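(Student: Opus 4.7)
The plan is to handle the two assertions separately, since each is essentially a measure-theoretic bookkeeping exercise once the right invariant is identified.

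For the first claim, I would fix a non-wasteful mechanism returning $(C,D)$ on input $(A,B)$, so we have $C\subseteq A$, $D\subseteq B$, $C\cap D=\emptyset$, and $C\cup D=A\cup B$; in particular $|C|+|D|=|A\cup B|$. Suppose some alternative disjoint measurable $(C',D')$ weakly Pareto-dominates $(C,D)$. Translating the two weak-preference inequalities into measures gives $|C'\cap A|\ge |C\cap A|=|C|$ and $|D'\cap B|\ge |D\cap B|=|D|$. The key observation is that $C'\cap A$ and $D'\cap B$ are disjoint subsets of $A\cup B$, so their measures sum to at most $|A\cup B|=|C|+|D|$. Combining this with the two lower bounds forces both inequalities to be equalities, ruling out any strict Pareto improvement.

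For the second claim I would set $C'(A,B)=C(A,B)\cap A$ and $D'(A,B)=D(A,B)\cap B$ and verify the four properties in the definition of non-wastefulness. Three of them are immediate: disjointness follows from $C'\cap D'\subseteq C\cap D=\emptyset$, the containments $C'\subseteq A$ and $D'\subseteq B$ are by construction, and the valuations are preserved since $V_A(C')=|C'\cap A|/|A|=|C\cap A|/|A|=V_A(C)$ and similarly for player II (so the mechanism really is equivalent). The nontrivial condition is $C'\cup D'=A\cup B$, up to measure zero.

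I would establish this by contradiction. Let $E=(A\cup B)\setminus(C'\cup D')$ and suppose $|E|>0$; then by symmetry assume $|E\cap A|>0$. Define $\tilde C=C'\cup(E\cap A)$ and $\tilde D=D'$. These are disjoint since $E\cap A$ avoids $D'=D\cap B\supseteq C'\cap D'$ by construction of $E$, and since $C'\cap D'=\emptyset$. Player I's value strictly increases, because $|\tilde C\cap A|=|C'|+|E\cap A|>|C\cap A|$, while player II's value is unchanged. This strictly Pareto-dominates $(C,D)$, contradicting the assumed Pareto-efficiency of the original mechanism. Hence $|E|=0$, so $C'\cup D'$ equals $A\cup B$ up to a null set, which suffices under the Lebesgue-measure-based valuations.

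The main subtlety I anticipate is purely technical: making sure at each step that the sets in question are measurable and that the definition of Pareto-efficiency is being applied against the original $(C,D)$ rather than the truncated $(C',D')$, so that one does not have to rederive properties of the conversion in order to invoke the hypothesis. Once this is kept straight the argument is linear and requires no further ideas.
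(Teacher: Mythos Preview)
The paper states this proposition without proof, so there is no authors' argument to compare against; your task was effectively to supply one, and you have done so correctly.

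A couple of minor remarks. First, your disjointness justification for $\tilde C$ and $\tilde D$ is phrased a bit awkwardly (the clause ``$D'=D\cap B\supseteq C'\cap D'$'' is not what is doing the work); the clean statement is simply that $E$ is disjoint from $C'\cup D'$ by definition, hence $E\cap A$ is disjoint from $D'$, and $C'$ was already disjoint from $D'$. Second, the paper's definition of non-wasteful requires $C'\cup D'=A\cup B$ as sets, not merely up to measure zero. Your argument yields $|E|=0$; to obtain literal equality without affecting valuations one can replace $(C',D')$ by $(C'\cup(E\cap A),\,D'\cup(E\setminus A))$, which is easily checked to remain disjoint and contained in $A$, $B$ respectively. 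You already flag this issue and correctly note it is immaterial for the measure-based valuations, so this is only a cosmetic completion.
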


Thus any analysis of non-wasteful mechanisms directly implies a similar one for Pareto-efficient ones, as do all our results in this paper.
For a non-wasteful mechanism the valuations of the players are simply
$|C|/|A|$ for player I and $|D|/|B|$ for player II.




Although we do not deal directly with the envy-freeness of mechanisms, a
mechanism that is described in this paper has this property, as described
below.

\begin{definition}
$F=(C(A,B),D(A,B))$ is called {\em envy-free} if each player
weakly prefers the piece he received to the piece the other player received.
Formally, for all $A,B$: $V_A(C(A,B))\geq V_A(D(A,B))$ and similarly
for the second player, for all $A,B$: $V_B(D(A,B))\geq V_B(C(A,B))$.
\end{definition}

\subsection{The Aligned Model}
A special case of the above general model is called the {\em
aligned} model.  The model makes two specializing assumptions, one on player valuations, and the other on mechanism allocations:
\begin{enumerate}
\item
The two players are interested in
subsets of the form $[0,a]$ for player I and $[1-b,1]$ for player II.  
\item
The mechanism must divide te cake so that
player I and player II would receive subsets of the form $[0,c]$ and
$[1-d,1]$ respectively.
\end{enumerate}

In the aligned model we denote a mechanism as $f(a,b)=(c(a,b),d(a,b))$,
Where $c,d$ are in fact functions $c,d\colon \mathbb{R}^+ \times
\mathbb{R}^+ \to \mathbb{R}^+$, such that for all $a,b$:
$~c(a,b)+d(a,b)\leq 1$.

\subsection{The Price of Truthfulness}
As noted in the introduction, using the two reductions that will be proved
in section 4, it is possible to study a family of performance measures for the
aligned model and conclude from that implications for the general models. For
example, one of these performance measures is the Price of Truthfulness.

\begin{definition}
The {\em social welfare} of a mechanism $F=(C(A,B),D(A,B))$
on input $A,B$, denoted by $SW_{F}(A,B)$, is 
$SW_{F}(A,B)=V_A(C(A,B))+V_B(D(A,B))$.
\end{definition}

\begin{definition}
Denoted by $SW_{max}(A,B)$ is the sum of valuations of the two players in the
allocation that maximizes social welfare: 
$SW_{max}(A,B) = \max _{F} SW_{F}(A,B)$.
\end{definition}

\begin{definition}
The \emph{competitive ratio} for social welfare of a mechanism $F$ 
is $\eta _{F} = \min _{A,B} \eta _{F}(A,B)$, where 
$\eta _{F}(A,B) = \frac{SW_{F}(A,B)}{SW_{max}(A,B)}$.
\end{definition}

Similar to the price of anarchy, the \emph{price of truthfulness} is
the highest possible competitive ratio of a truthful mechanism.
Formally:

\begin{definition}
The price of truthfulness is $PoT\equiv\max_F \eta_F$, where $F$ ranges over all non-wasteful truthful mechanisms.
\end{definition}



\subsection{Randomized Mechanisms}
In the last part of our paper we will also consider randomized mechanisms.  
For the purposes of this paper, one may either consider those as a probability distribution over deterministic mechanisms,
or allow the mechanism's allocation $(C,D)$ to be a random variable.
\begin{definition}
For a randomized mechanism
$F$, the above definitions are extended by replacing
$SW_F(A,B)$ by $\mathbb{E} \left[ SW_F(A,B)\right]$ where the expectation is
over the random choices made by the mechanism.
\end{definition}

\section{The Aligned Model}
\subsection{Characterization of the Aligned Model}
\begin{theorem}\label{aligned_theorem}(Characterization of Aligned Model)
A non-wasteful deterministic mechanism for
two-players in the aligned model
is incentive-compatible if and only if
it is from the following family, characterized by $0\le \theta \le 1$:
the allocation gives the first player the interval
$\left[ 0,\min \left\{a,\max \left\{1-b,\theta\right\}\right\}\right]$
while the second player gets the interval
$\left[ 1 - \min \left\{b,\max \left\{ 1-a, 1-\theta\right\}\right\}, 1\right]$.
\end{theorem}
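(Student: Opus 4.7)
The ``if'' direction is direct verification: for any $\theta \in [0,1]$, the output $c(a',b)=\min\{a',\max\{1-b,\theta\}\}$ is bounded by $\max\{1-b,\theta\}$ independently of $a'$, so Player I's utility $\min\{c(a',b),a\}/a$ is monotone in $c(a',b)$ up to the cap $a$, and is maximized by truthful reporting; Player II is symmetric. My plan for ``only if'' is to first distill a one-dimensional structure for each player's allocation, then feed one player's IC into the other's to pin down a single constant $\theta$. Non-wastefulness reduces the content to $a+b>1$, where $c+d=1$ and $c\in[1-b,a]$. Setting $S(b):=\sup_{a'}c(a',b)$, Player I's IC $c(a,b)\ge \min\{c(a',b),a\}$, combined with $c(a,b)\le a$ and $c(a,b)\le S(b)$, yields $c(a,b)=\min\{S(b),a\}$, and the boundary case $a=1-b$ forces $S(b)\ge 1-b$.

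Next I will show that the value $S(b)$, whenever it strictly exceeds $1-b$, is a single constant $\theta$. Assuming $S(b_1)=s_1<s_2=S(b_2)$ with $s_i>1-b_i$, I will choose $a\in(\max\{s_1,1-b_2\},s_2)$; then $c(a,b_1)=s_1$ and $c(a,b_2)=a$, giving $d(a,b_1)=1-s_1$ and $d(a,b_2)=1-a$. Player II's IC with true $b_2$ and deviation $b_1$ becomes $1-a\ge\min\{1-s_1,b_2\}$, which fails by a two-case check on whether $b_2\ge 1-s_1$. Call this common value $\theta$ (taking $\theta=0$ if no such $b$ exists). To upgrade to the full formula $S(b)=\max\{1-b,\theta\}$, the non-trivial direction is: if $b>1-\theta$ then $S(b)>1-b$. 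I will argue by contradiction: picking $b_1$ with $S(b_1)=\theta$ and any $a>\theta$, a hypothetical $b_*>1-\theta$ with $S(b_*)=1-b_*$ gives $c(a,b_1)=\theta$ and $c(a,b_*)=1-b_*$, and Player II's IC between $b_1$ and $b_*$ then yields $1-\theta\ge\min\{b_*,b_1\}$, contradicting $b_*,b_1>1-\theta$.

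Finally, a mirror argument for Player II produces $d(a,b)=\min\{\max\{1-a,\tau\},b\}$ for some $\tau$, and the identity $c+d=1$ on the contested region, evaluated at any $a>\theta$ and $b>1-\theta$, forces $\tau=1-\theta$, completing the characterization. The main obstacle is the cross-player argument forcing the ``kink'' of $S$ to occur exactly at $b=1-\theta$: Player I's IC alone only constrains $c(\cdot,b)$ to have the shape $\min\{S(b),a\}$ while leaving $S$ nearly arbitrary above $1-b$, and it is only by choosing the right witness $a$ in the contested region and invoking Player II's IC that both the constancy of $S$ above its threshold and the precise location of that threshold can be simultaneously pinned down.
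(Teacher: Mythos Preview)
Your proposal is correct, but it takes a different and somewhat longer route than the paper for the key step of determining $S(b)$. Both you and the paper begin by establishing $c(a,b)=\min\{a,S(b)\}$ from Player~I's incentive compatibility alone; the paper does this via monotonicity and Lipschitz continuity, whereas your direct argument from $c(a,b)\ge\min\{c(a',b),a\}$ is actually cleaner and avoids any appeal to continuity. Where you diverge is in pinning down the function $S$. You argue directly from Player~II's IC, choosing witness reports $a$ in the contested region to force first that $S(b)$ is constant whenever it strictly exceeds $1-b$, and then that this constant regime is exactly $\{b>1-\theta\}$. The paper instead immediately writes down the symmetric form $d(a,b)=\min\{b,\nu(a)\}$ from Player~II's IC, and then the single identity $1-\min\{a,S(b)\}=\min\{b,\nu(a)\}$ (from $c+d=1$ on the overlap region), evaluated at $a=1$, yields $S(b)=\max\{1-b,1-\nu(1)\}$ in one stroke, with $\theta:=1-\nu(1)$. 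In effect, your two cross-player deviation arguments are reproducing by hand what the paper extracts for free from the structural constraint $c+d=1$ once both one-dimensional shapes are in place; you then use $c+d=1$ only at the very end to match $\tau=1-\theta$, whereas the paper uses it as the main engine.

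One minor edge case to patch: your second cross-player step picks ``any $a>\theta$'', which is vacuous when $\theta=1$; taking $a\ge\theta$ (so $a=1$ in that case) fixes this, since then $d(1,b_1)=0$ while the deviation to $b_*$ gives Player~II positive utility $\min\{b_*,b_1\}/b_1$, still contradicting IC.
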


The remainder of this subsection is a proof of the above theorem.

\medskip 

Assume $f(a,b)=(c(a,b),d(a,b))$ is a non-wasteful
incentive-compatible deterministic mechanism for two-players in the aligned
model.

In case $a+b\leq 1$, there is no overlap between the demands of the players
which are aligned to the sides. Therefore, from non-wastefulness,
the mechanism would have to give each player all of his demand (and that is
clearly incentive-compatible and deterministic). We can also notice that this
scenario matches the expressions for the pieces of the cake allocated to the
players from the theorem, regardless of $\theta$.

During the rest of this proof, we will assume that there is an
overlap between the demands of the player, i.e. $a+b > 1$.

\begin{definition}
For the mechanism $f(a,b)=(c(a,b),d(a,b))$ and a fixed demand $b$ for player
II, we will denote by $c_b(a)$ the function $c(a,b)$, which
determines the size of the piece that the mechanism gives to player I according
to his demands $a$.
In a similar way $d_a(b)$ is also defined.
\end{definition}

\begin{lemma}
For every $b$, the function $c_b(a)$ of the mechanism $f(a,b)$ is non
decreasing and Lipschitz continuous (with a Lipschitz constant $K=1$).
\end{lemma}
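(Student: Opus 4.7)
The plan is to extract both monotonicity and the $K=1$ Lipschitz bound from the incentive-compatibility of player I, applied to two complementary misreports.

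First I identify player I's utility function. If player I has true demand $a$ but reports $a'$, the mechanism allocates him the interval $[0, c(a',b)]$ while his valuation is uniform on $[0,a]$; thus his utility equals $|[0, c(a',b)] \cap [0, a]|/a = \min\{c(a',b), a\}/a$. Non-wastefulness forces $c(a,b) \le a$ for every $a$, so $\min\{c(a,b), a\} = c(a,b)$, and the incentive-compatibility condition condenses into the single inequality
\[
c(a,b) \ge \min\{c(a',b),\, a\} \quad \text{for all } a, a'.
\]

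For monotonicity, fix $b$ and take $a_1 < a_2$. Applying the inequality with true demand $a_2$ and reported demand $a_1$ yields $c(a_2,b) \ge \min\{c(a_1,b),\, a_2\}$; since non-wastefulness gives $c(a_1,b) \le a_1 < a_2$, the minimum is just $c(a_1,b)$, so $c(a_2,b) \ge c(a_1,b)$.

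For the Lipschitz bound, by monotonicity it suffices to show $c(a_2,b) - c(a_1,b) \le a_2 - a_1$. I apply the IC inequality in the other direction, with true demand $a_1$ and report $a_2$: $c(a_1,b) \ge \min\{c(a_2,b),\, a_1\}$. If $c(a_2,b) \le a_1$, then combined with monotonicity this forces $c(a_1,b) = c(a_2,b)$ and the difference vanishes. Otherwise $c(a_2,b) > a_1$ forces $c(a_1,b) \ge a_1$, which combined with non-wastefulness ($c(a_1,b) \le a_1$) pins $c(a_1,b) = a_1$; then $c(a_2,b) - c(a_1,b) = c(a_2,b) - a_1 \le a_2 - a_1$ using $c(a_2,b) \le a_2$. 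There is no real obstacle beyond careful bookkeeping; the one conceptual subtlety is remembering that the allocated interval depends on the \emph{report} while the valuation is with respect to the \emph{true} demand, which is exactly what produces the $\min$ in the IC inequality and drives the clean two-case split.
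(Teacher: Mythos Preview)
Your proof is correct and follows essentially the same approach as the paper: both arguments derive monotonicity from the IC constraint in which the player with the larger true demand misreports the smaller one, and the Lipschitz bound from the reverse misreport, combined with non-wastefulness. The only difference is presentational: you first distill incentive compatibility into the single inequality $c(a,b) \ge \min\{c(a',b),\,a\}$ and then argue directly, whereas the paper phrases each step as a proof by contradiction; the underlying logic is identical.
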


\begin{proof}
For $a<a'$, say that $c_b(a)>c_b(a')$, then from non-wastefulness, $a'>a\geq
c_b(a)>c_b(a')$. Therefore, if player I's real interest is a piece of size $a'$,
he can gain strictly more by demanding $a$ instead. He would receive not only a
larger piece of the cake, but also a larger piece of his interest, due to the
alignment of the piece to the side.
That stands in contradiction to the incentive-compatibility of
the mechanism. Hence, $c_b(a)\leq c_b(a')$, meaning that $c_b(a)$ is non
decreasing.

Furthermore, for $a<a'$, $c_b(a')-c_b(a)\leq a'-a$.
Otherwise, if $c_b(a')-c_b(a)>a'-a$,
this means that $c_b(a')-a'+a>c_b(a)$. Since the mechanism is non-wasteful,
$c_b(a')\leq a'$, and therefore $a>c_b(a)$. In such a case, if player I's real
interest is of size $a$, he will not receive all of his demand. Therefore, he
might lie and demand $a'$ instead. By asking for $a'$ he would receive a larger
piece ($c_b(a')-a'+a>c_b(a)\Rightarrow c_b(a')>c_b(a)$), which because of
the alignment, has a larger intersection with his real interest. Again, this
contradicts the incentive-compatibility of the mechanism.

We have that $c_b(a)$ is Lipschitz continuous (with a Lipschitz constant $K=1$).
\end{proof}

Therefore $c_b(a)$ is continuous. Hence, in the interval $[0,1]$ it must
attain a maximum value, and the following quantities are well defined.

\begin{definition}
$\mu(b)$ is the maximal piece size that player I can receive, when
player II demands a piece of size $b$.
Formally, $\mu(b)\equiv \max_{a}{c_b(a)}$.

In the same way $\nu(a)\equiv \max_{b}{d_a(b)}$ is defined for player II.
\end{definition}

\begin{definition}
We will denote by $a_{m}$ the minimal $a$ for which
$c_b(a_{m})=\mu(b)$.
\end{definition}

\begin{lemma}
For the mechanism $f(a,b)$ as mentioned, for every $b$:
\[
c_b(a)=	\begin{dcases*}
			a 	 	& for $a < \mu(b)$\\
			\mu(b) 	& for $a \geq\mu(b)$
		\end{dcases*}
	= \min\left\{a,\mu(b)\right\}
\]

(see Figure \ref{cGraph})
\end{lemma}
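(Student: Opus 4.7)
The plan is to prove the claimed formula by showing both $c_b(a) \le \min\{a,\mu(b)\}$ and $c_b(a) \ge \min\{a,\mu(b)\}$ for every $a$, with the lower bound coming from incentive-compatibility via a single well-chosen deviation, namely reporting $a_m$.

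The upper bound is immediate: non-wastefulness (interpreted in the aligned model as $c_b(a) \le a$) plus the definition $\mu(b) = \max_a c_b(a)$ together give $c_b(a) \le \min\{a,\mu(b)\}$.

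For the lower bound I would split into the two cases in the statement. In the case $a \ge \mu(b)$, suppose for contradiction that $c_b(a) < \mu(b)$; then player I with true demand $a$ gets valuation $c_b(a)/a$, while by reporting $a_m$ instead he receives the interval $[0,c_b(a_m)] = [0,\mu(b)]$, whose intersection with his true interest $[0,a]$ has length $\min\{\mu(b),a\} = \mu(b)$ (since $a \ge \mu(b)$), so his valuation becomes $\mu(b)/a > c_b(a)/a$, contradicting incentive-compatibility. Combined with $c_b(a) \le \mu(b)$ this forces $c_b(a) = \mu(b)$. In the case $a < \mu(b)$, suppose for contradiction $c_b(a) < a$; reporting $a_m$ again yields the allocation $[0,\mu(b)]$, which because $\mu(b) > a$ contains all of $[0,a]$, giving player I the maximal possible valuation $1 > c_b(a)/a$, again violating incentive-compatibility. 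Hence $c_b(a) = a$.

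The step I would expect to be the only subtle point is making sure the deviation $a \mapsto a_m$ is legitimate — that is, that $a_m$ lies in the domain of admissible reports and that the mechanism's output on input $(a_m,b)$ really is $[0,\mu(b)]$ regardless of player I's true value. Both are built into the setup: $a_m$ is defined to be a point where the maximum of $c_b$ is attained (and $c_b$ is continuous on $[0,1]$ by the preceding lemma, so $a_m$ exists), and the mechanism's output depends only on the reports, not on the true types. After these observations are noted, the rest is a one-line contradiction in each case.
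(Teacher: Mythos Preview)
Your proof is correct and rests on the same core idea as the paper: deviate to $a_m$ and use incentive-compatibility to force the lower bound, with the upper bound coming from non-wastefulness and the definition of $\mu(b)$.

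There is one small structural difference worth noting. The paper splits cases at $a_m$ rather than at $\mu(b)$: for $a>a_m$ it appeals to the monotonicity of $c_b$ (established in the preceding lemma) to get $c_b(a)\ge c_b(a_m)=\mu(b)$, and only afterward deduces $a_m=\mu(b)$ via continuity. You instead split directly at $\mu(b)$ and use the deviation $a\mapsto a_m$ in \emph{both} cases, which is slightly more self-contained since it does not invoke monotonicity at all for the lower bound. The upshot is the same, and your single inequality $c_b(a)\ge \min\{c_b(a_m),a\}=\min\{\mu(b),a\}$ handles both regimes at once.
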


\begin{proof}
{\text

}
\begin{itemize}
  \item For $a<a_{m}$, $c_b(a)$ can not be larger than $a$, because of the
  non-wastefulness of $f(a,b)$. If $c_b(a)<a$, then player I,
  whose real interest is of size $a$, does not receive all of his interest and
  therefore would prefer to lie and ask for $a_m$. Since $a<a_{m}$, by
  definition of $a_m$, $c_b(a)<c_b(a_m)$. Not only would player I receive
  a strictly larger piece by lying, since the piece is aligned to the side, he
  would also receive a strictly larger piece of his real interest. This stands
  in contradiction to the incentive-compatibility of $f$. Therefore, $c_b(a)=a$.
  \item For $a>a_m$, since $c_b(a)$ is non-decreasing, $c_b(a)\geq c_b(a_m)$. It
  is also known that $c_b(a_m)=\mu(b)$ is the maximal value of $c_b(a)$.
  Therefore, $c_b(a)=\mu(b)$.
  \item We showed that for $a<a_{m}$, $c_b(a)=a$, hence $c_b(a_m)=a_m$ by
  continuity. Since $c_b(a_m)=\mu(b)$, $a_m=\mu(b)$.

%
%
%
\end{itemize}
Putting everything together, we get that $c_b(a)= \min\left\{a,\mu(b)\right\}$.

\end{proof}

\begin{remark}
Characterization of player II's piece size for a fixed $a$ can be done in the
same way to obtain $d_a(b)= \min\left\{b,\nu(a)\right\}$.
\end{remark}


Now, we can continue to the characterization of the function $\mu(b)$.

\begin{remark}
We should notice that $\mu(1)+\nu(1)=1$ (from
non-wastefulness, in case both players want the whole cake we should
divide the whole cake).
\end{remark}

\begin{lemma}
The function $\mu(b)$ must be of the form:
\[
 \mu(b) = 	\begin{dcases*}
				1- b  	& for $b < 1-\theta$\\
				\theta 	& for $b \geq 1-\theta$
			\end{dcases*}
    	=\max\left\{1-b,\theta\right\}
\]
(For $\theta \in [0,1]$).
(see Figure \ref{muGraph})
\end{lemma}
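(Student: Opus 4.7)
The plan is to read off $\mu(b)$ from the value of the mechanism at $a=1$, using non-wastefulness together with the analogous characterization of player II's allocation. First, I define $\theta := \mu(1)$, which lies in $[0,1]$ since every $c_b(a) \in [0,1]$. By the previous lemma, $c_b(1) = \min\{1,\mu(b)\} = \mu(b)$, and symmetrically (using the remark immediately following it) $d_1(b) = \min\{b, \nu(1)\}$.

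The next step is to invoke non-wastefulness at $(a,b) = (1,b)$ for any $b > 0$. In this regime $a + b > 1$, so $A \cup B = [0,1]$, while the allocated intervals $[0,c_b(1)]$ and $[1 - d_1(b),1]$ are disjoint subsets of $A$ and $B$ respectively. Thus $c_b(1) + d_1(b) = 1$, giving
\[
\mu(b) \;=\; 1 - d_1(b) \;=\; 1 - \min\{b,\nu(1)\}.
\]
Specializing this identity at $b = 1$ (equivalently, appealing to the remark $\mu(1) + \nu(1) = 1$) yields $\nu(1) = 1 - \mu(1) = 1 - \theta$. Substituting back,
\[
\mu(b) \;=\; 1 - \min\{b,\, 1-\theta\} \;=\; \max\{1-b,\, \theta\},
\]
which is the claimed form. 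The boundary case $b = 0$ is handled by the introductory observation of the proof that in the no-overlap regime ($a+b \le 1$) each player receives his whole demand; thus $\mu(0) = 1 = \max\{1,\theta\}$, consistent with the formula.

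There is no serious obstacle: once both the primal lemma for $c_b$ and its mirror for $d_a$ are in hand, the statement collapses into a short chain of substitutions. The only care needed is the use of non-wastefulness at $a=1$ to obtain $c_b(1) + d_1(b) = 1$, which is immediate because $A \cup B = [0,1]$ in that regime, and the observation that $\mu(b) \le 1$ so that the $\min$ in $c_b(1) = \min\{1,\mu(b)\}$ resolves to $\mu(b)$.
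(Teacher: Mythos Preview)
Your proposal is correct and follows essentially the same route as the paper: both arguments use $c_b(a)=\min\{a,\mu(b)\}$ and $d_a(b)=\min\{b,\nu(a)\}$, evaluate at $a=1$ so that non-wastefulness forces $\mu(b)=c_b(1)=1-d_1(b)=1-\min\{b,\nu(1)\}$, and then substitute $\nu(1)=1-\theta$ via the identity $\mu(1)+\nu(1)=1$. The paper records the intermediate general relation $1-\min\{a,\mu(b)\}=\min\{b,\nu(a)\}$ before specializing to $a=1$, whereas you go straight to $a=1$; this is a purely cosmetic difference.
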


\begin{proof}
According to the function $c_b(a)$, which we found earlier, the size of
the piece that player I receives is $\min \{a,\mu(b)\}$. As mentioned in
the beginning of the subsection, we assume that $a+b > 1$. As we
are examining the aligned model, the mechanism should divide the
whole interval $[0,1]$.
Therefore, player II would receive $1-\min \{a,\mu(b)\}$.
We also know that the form of the function $d_a(b)$
resembles the form of $c_b(a)$ and that means that the size of the piece
that player II receives is $\min \{b,\nu(a)\}$. Combined together:

\[
1-\min \{a,\mu(b)\}=\min \{b,\nu(a)\}=\
            \begin{dcases*}
            	b 		& for $b < \nu(a)$\\
            	\nu(a) 	& for $b \geq \nu(a)$
            \end{dcases*}\
\]

\[
\xRightarrow\
\min \{a,\mu(b)\}=\
				\begin{dcases*}
            		1-b  	& for $b < \nu(a)$\\
            		1-\nu(a)& for $b \geq \nu(a)$
            	\end{dcases*}\
\]

Let us look at the last equation for $a=1$:
\[
\mu(b)=\min \{1,\mu(b)\}=\
            	\begin{dcases*}
            		1-b  	& for $b < \nu(1)$\\
            		1-\nu(1)& for $b \geq \nu(1)$
            	\end{dcases*}\
\]
We also know that $\mu(b)$ does not depend on $a$. Therefore, the last
statement is true in general and not only for $a=1$.
We showed previously that $\mu(1)+\nu(1)=1$. Let us denote
$\theta\equiv\mu(1)=1-\nu(1)$, and rewrite $\mu(b)$ ($\nu(a)$ can be found in a
similar way):
\[
\mu(b)=\
         	\begin{dcases*}
            	1-b  	& for $b < 1 - \theta$\\
            	\theta 	& for $b \geq 1 - \theta$
            \end{dcases*}\
    	=\max\{1-b,\theta\}
\]
\[
\nu(a)=\
            \begin{dcases*}
            	1-a  	& for $a < \theta$\\
            	1-\theta& for $a \geq \theta$
            \end{dcases*}\
    	=\max\{1-a,1-\theta\}
\]

\end{proof}

\begin{figure}[h]
\begin{minipage}[b]{0.5\linewidth}
	\centering
	\includegraphics[width=2.0in]{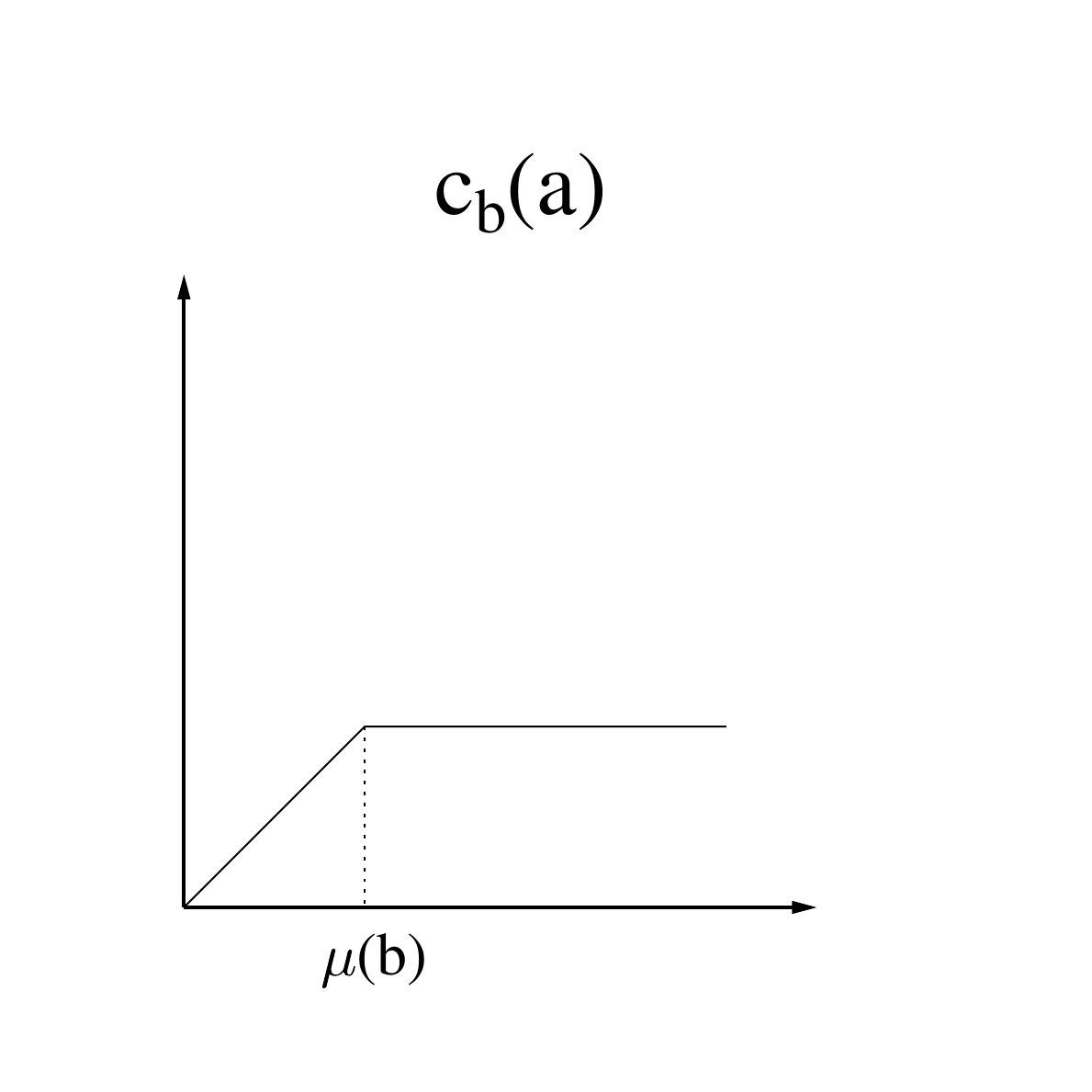}
    \caption{The size of the allocation for player I as a function of his demand
    (for a constant $b$)}
    \label{cGraph}
\end{minipage}
\hspace{0.5cm}
\begin{minipage}[b]{0.5\linewidth}
	\centering
	\includegraphics[width=2.0in]{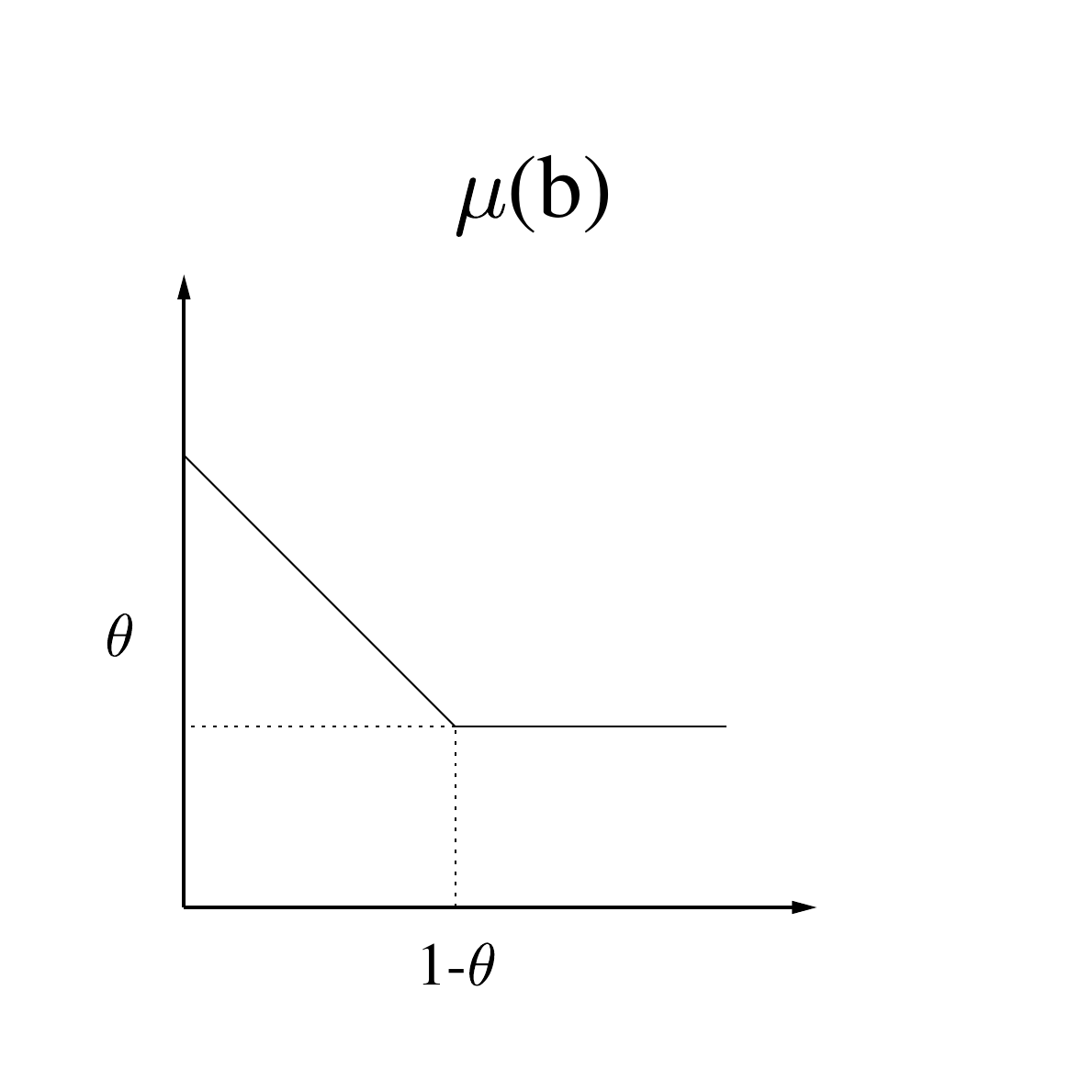}
    \caption{The value in which the graph $c_b(a)$ (for a specific $b$) turns 
    constant, as a function of that $b$)}
    \label{muGraph}
\end{minipage}
\end{figure}


If we insert those $\mu(b)$ and $\nu(a)$ into the expressions for
$c_b(a)$ and $d_a(b)$ that we found earlier, we would get that
$c(a,b)=\min\{a,\max\{1-b,\theta\}\}$ and
$d(a,b)=\min\{b,\max\{1-a,1-\theta\}\}$, as in the statement of the theorem.

In the opposite direction, it can be noticed that the allocation is
deterministic. Furthermore, for all values of $a,b$ and $\theta$, each of the
players either receives all of his demand, or a maximal value which
depends only on the other player. Therefore, he cannot gain by lying.
Moreover, $c(a,b)+d(a,b)=\min\{a+b,1\}$, and because of the alignment of the
interests and allocations, this type of allocation is non-wasteful.

We conclude that this is in fact the family of all possible mechanisms. We will
denote by $f_{\theta}$ the mechanism with the parameter $\theta$ from that family.

\subsection{Social Welfare in the Aligned Model}
\begin{restatable}{theorem}{thetaHalf}\label{theta_half}
The non-wasteful and incentive-compatible deterministic mechanism
$f_{\frac{1}{2}}$ for the aligned model achieves
$\eta_{f_{\frac{1}{2}}}=(8-4\sqrt{3})^{-1}\approx 0.93$.
\end{restatable}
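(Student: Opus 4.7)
The plan is to apply the characterization (Theorem~\ref{aligned_theorem}) with $\theta=\tfrac12$, compute $SW_{f_{1/2}}(a,b)$ and $SW_{max}(a,b)$ in closed form, and minimize the ratio by a short case analysis and one-variable calculus.

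First, I would reduce to the nontrivial region. If $a+b\le 1$ then both players receive all of their demand, so $SW_{f_{1/2}}(a,b)=SW_{max}(a,b)=2$ and the ratio is $1$. For $a+b>1$, a direct optimal (welfare-maximizing) allocation in the aligned model assigns all of $[0,1]$ and gives the overlap $[1-b,a]$ to the player with the smaller demand, yielding
\[
SW_{max}(a,b)\;=\;1+\frac{1-\min\{a,b\}}{\max\{a,b\}}.
\]
For $f_{1/2}$, plugging $\theta=\tfrac12$ into the formulas of Theorem~\ref{aligned_theorem}, I would split according to whether $a$ and $b$ are $\le\tfrac12$ or $>\tfrac12$. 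When $a\le\tfrac12<b$ one checks that player I gets his full demand $a$ and player II gets $1-a$, so $SW_{f_{1/2}}=1+(1-a)/b=SW_{max}$, giving ratio $1$; the symmetric case is identical. The only case that can reduce the ratio below $1$ is $a,b>\tfrac12$, where $c(a,b)=d(a,b)=\tfrac12$ and
\[
SW_{f_{1/2}}(a,b)\;=\;\frac{1}{2a}+\frac{1}{2b}\;=\;\frac{a+b}{2ab}.
\]

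On the region $\tfrac12\le a\le b\le 1$ the ratio becomes
\[
\eta_{f_{1/2}}(a,b)\;=\;\frac{(a+b)/(2ab)}{(1-a+b)/b}\;=\;\frac{a+b}{2a(1-a+b)}.
\]
Differentiating in $b$ I get $\partial_b\eta=(1-2a)/\bigl(2a(1-a+b)^2\bigr)<0$ because $a>\tfrac12$, so the minimum is attained on the boundary $b=1$. It then remains to minimize
\[
\eta(a,1)\;=\;\frac{a+1}{2a(2-a)}
\]
over $a\in[\tfrac12,1]$. The critical-point equation $2a^2+4a-4=0$ yields the unique interior root $a^*=\sqrt{3}-1$, and plugging in gives
\[
\eta(a^*,1)\;=\;\frac{\sqrt{3}}{2(\sqrt{3}-1)(3-\sqrt{3})}\;=\;\frac{1}{2(\sqrt{3}-1)^2}\;=\;\frac{1}{8-4\sqrt{3}},
\]
using $3-\sqrt{3}=\sqrt{3}(\sqrt{3}-1)$. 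A quick check that $\eta(\tfrac12,1)=1$ and $\eta(1,1)=1$ confirms that $a^*$ is indeed the minimizer, establishing $\eta_{f_{1/2}}=(8-4\sqrt{3})^{-1}$.

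The arithmetic is straightforward; the only real content is organizing the case split cleanly and recognizing that the worst case lies on the boundary $b=1$ so that the optimization collapses to a single variable. The identity $3-\sqrt3=\sqrt3(\sqrt3-1)$ is what makes the final value collapse to the advertised $(8-4\sqrt3)^{-1}$, so the main thing to get right is not to lose that algebraic simplification along the way.
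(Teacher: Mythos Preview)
Your proposal is correct and follows essentially the same approach as the paper: reduce to $a+b>1$, observe that only the region $a,b>\tfrac12$ gives a ratio below $1$, and minimize $\frac{a+b}{2a(1-a+b)}$ there to find the extremum at $(a,b)=(\sqrt{3}-1,1)$ (and its symmetric counterpart). The only difference is presentational: the paper tabulates four orderings of $a,1-a,b,1-b,\tfrac12$ and states the minimizing points, whereas you exploit the $a\leftrightarrow b$ symmetry up front and carry out the one-variable calculus explicitly, which is slightly cleaner.
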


This theorem is proved in the appendix. 

\begin{remark}\label{half_best_theta}
It is possible to prove that for every other mechanism $f_{\theta}$ from the
family of mechanisms for the aligned model,
$\eta_{f_{\theta}}<\eta_{f_{\frac{1}{2}}}$, by checking the value of
$\eta_{f_{\theta}}$ for two possible sets of inputs:
$\tilde{a}=1,\tilde{b}=\sqrt{3}-1$ and
$\tilde{a}=\sqrt{3}-1,\tilde{b}=1$. We will prove a stronger theorem in the last
section.

Moreover, it can be noticed that $\theta=\frac{1}{2}$ is the only $\theta$
for which $f_{\theta}$ is envy-free.
\end{remark}


\section{Reductions}
\subsection{Reduction From the Aligned to the General Model}
\begin{theorem}
Let $f=(c(a,b),d(a,b))$ be an incentive-compatible and non-wasteful
mechanism for the aligned model. There exists an incentive-compatible and
non-wasteful mechanism $F=(C(A,B),D(A,B))$ for the general model such that for
all $A,B$: $|C(A,B)|/|A \cup B| = c(a,b)$ and  $|D(A,B)|/|A \cup B|=d(a,b)$
where $a\equiv|A|/|A \cup B|$ and $b\equiv|B|/|A \cup B|$.
\end{theorem}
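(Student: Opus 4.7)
The plan is to assign each player their uncontested region and then split the overlap $A\cap B$ in sizes dictated by $f$. Let $N=|A\cup B|$, $a=|A|/N$, $b=|B|/N$. When $|A\cap B|=0$, non-wastefulness forces $C=A$, $D=B$, matching the aligned output (since $c(a,b)+d(a,b)=1$ with no overlap). Assume $|A\cap B|>0$. By Theorem~\ref{aligned_theorem}, $f=f_\theta$ for some $\theta\in[0,1]$, giving $c(a,b)\cdot N=\min\{|A|,\max\{|A\setminus B|,\theta N\}\}$. I would set $C(A,B)=(A\setminus B)\cup M_1$, where $M_1$ is any fixed measurable subset of $A\cap B$ of size $c(a,b)\cdot N-|A\setminus B|$, and $D(A,B)=(A\cup B)\setminus C(A,B)$. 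Routine arithmetic (using $c+d=1$, $c\ge 1-b$, $c\le a$) confirms $0\le|M_1|\le|A\cap B|$, the desired size identities, and non-wastefulness.

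The main work is incentive-compatibility. Fix true $A,B$ and a misreport $A'$ by player~I. Writing $C(A',B)=(A'\setminus B)\cup M_1'$ with $M_1'\subseteq A'\cap B$, and noting $(A'\setminus B)\cap A\subseteq A\setminus B$ and $M_1'\cap A\subseteq A\cap B$, we obtain
\[
|C(A',B)\cap A|\;\le\;\min\{|A'\setminus B|,\,|A\setminus B|\}\;+\;\min\{|M_1'|,\,|A\cap B|\}.
\]
Crucially, this bound is independent of \emph{which} subset of $A'\cap B$ the mechanism selects for $M_1'$, so the specific rule for choosing $M_1$ is immaterial to the argument.

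Writing $s:=|A'\setminus B|$ and computing $|M_1'|=c(a',b')(s+|B|)-s$ from the explicit formula for $c$ gives
\[
|M_1'|\;\le\;\max\{0,\,\theta|B|-(1-\theta)s\},
\]
a non-increasing function of $s$. The resulting bound on $|C(A',B)\cap A|$ becomes a piecewise-linear function of the single parameter $s$. A short case analysis---splitting on whether $s\le|A\setminus B|$ and whether the cap $|A\cap B|$ binds---shows that the maximum is attained at $s=|A\setminus B|$, where it equals $|A\setminus B|+|M_1|=|C(A,B)|$. Thus $|C(A',B)\cap A|\le|C(A,B)|$ for every misreport, establishing IC for player~I; IC for player~II follows by symmetry.

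The main conceptual obstacle is recognizing that any measurable choice of $M_1$ of the right size within $A\cap B$ will do: the sizes, pinned down by the aligned mechanism, carry all the IC burden, and the freedom in selecting the particular subset is irrelevant. Once this is seen, what remains is the one-dimensional optimization above, corresponding directly to the IC of the aligned mechanism.
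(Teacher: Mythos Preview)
Your construction of $F$ is identical to the paper's (each player first receives his uncontested region $A\setminus B$ resp.\ $B\setminus A$, then a share of $A\cap B$ of the size dictated by $f_\theta$), and your verification of non-wastefulness and of the size identities matches the paper's.

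Where you diverge is in the incentive-compatibility argument. The paper decomposes the symmetric difference $A\mathbin{\triangle}A'$ into four pieces
\[
\Delta_1=(A'\setminus A)\cap B,\quad
\Delta_2=(A\setminus A')\cap B,\quad
\Delta_3=(A'\setminus A)\cap \bar B,\quad
\Delta_4=(A\setminus A')\cap \bar B,
\]
and walks from $A'$ back to $A$ in four steps, showing after each step that player~I's realized value cannot drop; this is a ``local'' stepwise argument that never computes the deviator's utility in closed form. You instead bound the deviator's utility globally: using the explicit formula for $f_\theta$ you reduce $|C(A',B)\cap A|$ to a piecewise-linear function of the single scalar $s=|A'\setminus B|$, and then check directly that this function is maximized at $s=|A\setminus B|$. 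I verified your claimed inequality $|M_1'|\le\max\{0,\theta|B|-(1-\theta)s\}$ and the subsequent optimization; the case analysis you sketch does go through (the key point is that $g(s)=\max\{0,\theta|B|-(1-\theta)s\}$ decreases with slope at most $1$, so the gain in the first term $\min\{s,|A\setminus B|\}$ always dominates the loss in the second). Your observation that the particular measurable choice of $M_1\subseteq A\cap B$ is irrelevant is correct and is also implicit in the paper's argument. The two approaches buy different things: the paper's four-step walk is more structural and makes transparent exactly which part of a lie is harmful, while your single-parameter optimization is shorter and more directly mirrors the one-dimensional IC of the aligned mechanism.
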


Note that from the properties of $f$ it has to be from the family of
mechanisms described in the previous section. Therefore there is a $\theta$
such that $f$ is $f_{\theta}$.


For that $f_{\theta}$, we will define mechanism
$F(A,B)$ as follows:
\begin{itemize}
  \item  Use the mechanism $f_{\theta}$ to calculate the size of the players'
  allocations $(c(a,b),d(a,b))$ when:\footnote{The division by
  $|A \cup B|$ in this phase is a normalization of the original demands over a
  full $[0,1]$ interval.}
    \begin{itemize}
        \item $a=\frac{|A|}{|A \cup B|}$, meaning player I demands the section
        $[0,\frac{|A|}{|A \cup B|}]$.
        \item $b=\frac{|B|}{|A \cup B|}$, meaning player II demands the section
        $[1-\frac{|B|}{|A \cup B|},1]$.
    \end{itemize}
  \item  Calculate
  $|C(A,B)| \equiv c(a,b) \cdot|A \cup B|$ and
  $|D(A,B)| \equiv d(a,b) \cdot|A \cup B|$. \footnote{A normalization of the
  results back to the original interval.}
  \item  Give player I pieces in a total size of $|C(A,B)|$ and Player II pieces
  in a total size of $|D(A,B)|$. For each of them -- start at first from giving
  the cake intervals that only he asked for, then move to intervals in the joint
  area.
\end{itemize}

The size of the piece that mechanism $F$ would allocate to player I is:
$|C(A,B)|=|A \cup B|\cdot\min\{\frac{|A|}{|A \cup B|},\max\{1-\frac{|B|}{|A
\cup B|},\theta\}\}=\min\{|A|,\max\{|A \cup B|-|B|,\theta\cdot|A \cup
B|\}\}=\min\{|A|,\max\{|A\setminus B|,\theta\cdot|A \cup B|\}\}$.
In a similar way we can get the expression for the size of player II's
piece.

\begin{lemma}
$F$ is non-wasteful.
\end{lemma}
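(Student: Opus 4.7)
The plan is to verify directly the three defining conditions of non-wastefulness: $C(A,B)\subseteq A$, $D(A,B)\subseteq B$, and $C(A,B)\cup D(A,B)=A\cup B$ (with disjointness already guaranteed since $F$ is a mechanism). The first two inclusions are built into the construction, which only ever feeds player I from $A$ (first $A\setminus B$, then $A\cap B$) and player II from $B$. So the real content is showing that the prescribed sizes $|C(A,B)|$ and $|D(A,B)|$ fit together in such a way that (a) each player's budget is at most the measure of his demand, and (b) together they exhaust $A\cup B$ without the two pieces colliding inside $A\cap B$.

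First I would dispose of the degenerate case $|A\cap B|=0$, i.e.\ $a+b\le 1$. Here the aligned formula collapses to $c(a,b)=a$, $d(a,b)=b$, giving $|C|=|A|$ and $|D|=|B|$; the greedy construction then yields $C=A$ and $D=B$ up to null sets, which are disjoint and cover $A\cup B$. For the main case $|A\cap B|>0$ we have $a+b=1+|A\cap B|/|A\cup B|>1$, and Theorem~\ref{aligned_theorem} gives $c(a,b)+d(a,b)=1$, so $|C(A,B)|+|D(A,B)|=|A\cup B|$. Next I would read off from the explicit formula $|C(A,B)|=\min\{|A|,\max\{|A\setminus B|,\theta|A\cup B|\}\}$ (derived in the paragraph preceding the lemma) the two-sided bound $|A\setminus B|\le |C(A,B)|\le |A|$: the upper bound is immediate from the outer $\min$, and the lower bound follows because the inner $\max$ is at least $|A\setminus B|$ while $|A|\ge|A\setminus B|$. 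The symmetric bound $|B\setminus A|\le|D(A,B)|\le|B|$ holds by the same argument.

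These two sandwiches are exactly what the greedy assignment needs. Player I can first consume all of $A\setminus B$ (since $|A\setminus B|\le|C|$) and then the remaining $|C|-|A\setminus B|$ from $A\cap B$; the upper bound $|C|\le|A|$ ensures this remainder is at most $|A\cap B|$ and thus fits. Player II is handled symmetrically. The portions taken from $A\cap B$ have total measure $|C|+|D|-|A\setminus B|-|B\setminus A|=|A\cup B|-|A\setminus B|-|B\setminus A|=|A\cap B|$, so they partition $A\cap B$ (for instance, by taking player I's portion from the left end of $A\cap B$ and player II's from the right). Hence $C\subseteq A$, $D\subseteq B$, and $C\cup D=(A\setminus B)\cup(B\setminus A)\cup(A\cap B)=A\cup B$, proving non-wastefulness. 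The only mildly non-trivial step is unfolding the nested $\min$–$\max$ to obtain the lower bound $|C(A,B)|\ge|A\setminus B|$; everything else is measure-theoretic bookkeeping that uses the identity $c(a,b)+d(a,b)=1$ from the characterization.
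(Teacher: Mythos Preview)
Your argument is correct and follows essentially the same line as the paper's proof: both use $c(a,b)+d(a,b)=1$ (since $a+b\ge 1$ always holds by construction) to get $|C|+|D|=|A\cup B|$, then the bounds $|C|\le|A|$, $|D|\le|B|$ to conclude the greedy allocation is feasible and exhausts $A\cup B$. One minor simplification: the lower bound $|C|\ge|A\setminus B|$ that you extract from the $\min$--$\max$ formula follows immediately from $|D|\le|B|$ and $|C|+|D|=|A\cup B|$, so no unfolding is needed.
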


\begin{proof}
The mechanism assigns two pieces with total size of 
$|C(A,B)|+|D(A,B)|=(c(a,b) + d(a,b))\cdot|A \cup
B|\underset{a+b\ge1\rightarrow c+d=1}{=}|A \cup B|$, meaning the total size
that was assigned is equal to the total requested size.
Moreover, $c(a,b)\le a=\frac{|A|}{|A\cup B|}$, therefore
$|C(A,B)|\le|A|$ and
in the same way $|D(A,B)|\le|B|$.
This means that the mechanism gives each player no more than his demand.
Therefore, it is possible to construct the player's allocation only from
intervals he has asked for. Since the allocation of those pieces starts with
intervals that only one player asked for and because
the total size allocated is $|A \cup B|$, the division is non-wasteful.
\end{proof}

\begin{restatable}{lemma}{reductionA}
$F$ is incentive-compatible.
\end{restatable}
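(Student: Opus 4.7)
The plan is to show that any lie $A'$ by player I cannot yield a value exceeding the truthful one, i.e., $|C(A',B)\cap A|\le|C(A,B)|$ for every $A'$. Because the mechanism is non-wasteful, the truthful value is $|C(A,B)|/|A|$ with $|C(A,B)|=\min\{|A|,\max\{|A\setminus B|,\theta|A\cup B|\}\}$ as derived just above the statement, so this inequality directly yields incentive compatibility.

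First I would dispose of the easy regime $|A|\le\theta|A\cup B|$, in which $|C(A,B)|=|A|$ and no lie can exceed the value $1$ that player I already receives. The remaining work is in the regime $|A|>\theta|A\cup B|$, where $|C(A,B)|=M\equiv\max\{|A\setminus B|,\theta|A\cup B|\}$, and the goal reduces to showing $|C(A',B)\cap A|\le M$ for every $A'$.

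The key step is a structural analysis of the possible values of $|C(A',B)\cap A|$. Writing $A'=(A\cap A')\cup E$ with $E=A'\setminus A$ disjoint from $A$, we have $C(A',B)\cap A\subseteq A\cap A'$. The construction fills $C(A',B)$ first from $A'\setminus B=((A\cap A')\setminus B)\cup(E\setminus B)$ and then from $A'\cap B=((A\cap A')\cap B)\cup(E\cap B)$, but leaves the specific subsets unspecified; to establish incentive compatibility uniformly over these choices, I would bound $|C(A',B)\cap A|$ by its worst-case (largest) value over all consistent choices. A four-way case split according to where $|C(A',B)|$ falls among the thresholds $|(A\cap A')\setminus B|$, $|A'\setminus B|$, and $|A'\setminus B|+|(A\cap A')\cap B|$ produces respective upper bounds $|C(A',B)|$, $|(A\cap A')\setminus B|$, $|C(A',B)|-|E\setminus B|$, and $|A\cap A'|$.

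The first two bounds are immediate since each is at most $|A\setminus B|\le M$. For the third, the triggering condition $|C(A',B)|>|A'\setminus B|$ forces the outer $\max$ in the formula for $|C(A',B)|$ to be realized by the $\theta|A'\cup B|$ branch, whence a short calculation yields $|C(A',B)|-|E\setminus B|\le\theta|A\cup B|-(1-\theta)|E\setminus B|\le\theta|A\cup B|\le M$. The fourth case is the main obstacle: the triggering condition $|C(A',B)|>|A'\setminus B|+|(A\cap A')\cap B|$, combined with $|C(A',B)|\le\max\{|A'\setminus B|,\theta|A'\cup B|\}$ and the trivial inequality $|A'\setminus B|\le|A\cap A'|+|E\setminus B|$, forces $\theta|A\cup B|>|A\cap A'|$, so $|A\cap A'|<\theta|A\cup B|\le M$, closing the argument.
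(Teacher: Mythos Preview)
Your argument is correct and proceeds along a genuinely different route from the paper's proof. The paper argues via a \emph{hybrid chain}: starting from an arbitrary lie $A_1$, it partitions the symmetric difference $A\triangle A_1$ into four pieces $\Delta_1=(A_1\setminus A)\cap B$, $\Delta_2=(A\setminus A_1)\cap B$, $\Delta_3=(A_1\setminus A)\cap\bar B$, $\Delta_4=(A\setminus A_1)\cap\bar B$, and walks through intermediate reports $A_1\to A_2\to A_3\to A_4\to A_5=A$ obtained by successively removing or restoring one $\Delta_i$, checking at each step that player~I's realized value can only increase. Your proof instead bounds $|C(A',B)\cap A|$ directly by a four-way case analysis on where $|C(A',B)|$ lies among the thresholds $|(A\cap A')\setminus B|$, $|A'\setminus B|$, and $|A'\setminus B|+|(A\cap A')\cap B|$, and compares each resulting bound to $M=\max\{|A\setminus B|,\theta|A\cup B|\}$. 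The hybrid argument is more structural and makes the ``why'' of each $\Delta_i$ transparent; your approach is more computational but has the virtue of explicitly taking the worst case over how the mechanism selects within $A'\setminus B$ and $A'\cap B$, so it certifies incentive compatibility uniformly over every admissible tie-breaking rule in the allocation step.

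Two small remarks on your Case~4. First, you implicitly use $|A'\cup B|\le |A\cup B|+|E\setminus B|$ (which holds because $E\cap A=\emptyset$ forces $E\setminus B$ to lie outside $A\cup B$); this should be stated. Second, the ``trivial inequality'' you cite is really the identity $|A'\setminus B|+|(A\cap A')\cap B|=|A\cap A'|+|E\setminus B|$, and it is this equality that, together with $|C(A',B)|\le\theta|A'\cup B|$ and the previous bound, yields $\theta|A\cup B|>|A\cap A'|+(1-\theta)|E\setminus B|\ge|A\cap A'|$. With those two clarifications the argument is complete. (Note also that since $|C(A',B)|\ge|A'\setminus B|$ always, your first two cases are degenerate boundary cases; this does not affect correctness.)
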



In this proof we examine a general subset $A_1$ which differs from the real
interest of player I, $A$. We look at the symmetric difference between those two subsets,
divide it into 4 disjoint sets, and one after the other show that zeroing the
size of a set cannot damage the player. Therefore, he has no interest to lie.
This lemma is fully proved in the appendix.
\medskip 

Concluding, the mechanism $F$ meets the demands of the theorem, thus completing
the proof.

\medskip 

Say we choose $f$ and examine the matching mechanism $F$, as described.
If the inputs for mechanism $F$ are $A,B$, we can look at the 4-tuple of ratios
created by $F$: $\left(\frac{|A|}{|A\cup B|},\frac{|B|}{|A\cup
B|},\frac{|C|}{|A\cup B|},\frac{|D|}{|A\cup B|}\right)$.
The above reduction shows that the inputs $a=\frac{|A|}{|A\cup
B|},b=\frac{|B|}{|A\cup B|}$ for mechanism $f$ will result in the output
$c=\frac{|C|}{|A\cup B|},d=\frac{|D|}{|A\cup B|}$. Since $a+b=\frac{|A|}{|A\cup
B|}+\frac{|B|}{|A\cup B|}\geq 1$ and since the requests are aligned to different
sides, the total demand made by the two players is of size 1. Therefore, in this
case, the 4-tuple of ratios is $(a,b,c,d)$, which is identical to the 4-tuple
that was obtained by $F$ on the inputs $A,B$.


%
\subsection{Reduction From the General to the Aligned
Model}\label{second_reduction}
\begin{theorem}
Let $F=(C(A,B),D(A,B))$ be an incentive-compatible and non-wasteful
mechanism for the general model. There exists an incentive-compatible and
non-wasteful mechanism $f=(c(a,b),d(a,b))$ for the aligned model, such that for
all $a,b$ there exist $A,B$ such that $|A|=a$, $|B|=b$, $c(a,b)=|C(A,B)|$ and
$d(a,b)=|D(A,B)|$, and furthermore whenever $a+b \ge 1$ we have that $A \cup
B=[0,1]$.
\end{theorem}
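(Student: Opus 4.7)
The plan is to exploit the characterization of the aligned model already proved: any IC non-wasteful aligned mechanism must be of the form $f_\theta$ for some $\theta\in[0,1]$. I would therefore set $\theta:=|C([0,1],[0,1])|$, write $C_0:=C([0,1],[0,1])$ and $D_0:=[0,1]\setminus C_0$ (so $|D_0|=1-\theta$ by non-wastefulness of $F$), and declare $f:=f_\theta$. Since the characterization theorem already guarantees that $f_\theta$ is IC and non-wasteful, the entire content of the theorem reduces to showing that for every $(a,b)$ there exist $A,B\subseteq[0,1]$ with $|A|=a$, $|B|=b$, and $A\cup B=[0,1]$ whenever $a+b\ge 1$, such that $|C(A,B)|=c_\theta(a,b)$ and $|D(A,B)|=d_\theta(a,b)$.

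Before the case analysis I would establish a short technical lemma using IC of $F$ against the ``report the whole cake'' deviation: if $A\supseteq C_0$ then $|C(A,[0,1])|=\theta$, and if $A\subseteq C_0$ then $|C(A,[0,1])|=|A|$, together with their mirror images for player II relative to $D_0$. Each of these follows by sandwiching $|C(A,[0,1])|$ between two IC inequalities -- a player with true demand $A$ deviating to $[0,1]$ and a player with true demand $[0,1]$ deviating to $A$ -- with non-wastefulness supplying the trivial upper bound wherever needed.

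The case $a+b<1$ is immediate: any disjoint $A,B$ of the correct sizes forces $|C|=a$ and $|D|=b$ by non-wastefulness, matching $f_\theta$. For $a+b\ge 1$, I split into three sub-cases. If $a\le\theta$ (which forces $b\ge 1-a\ge 1-\theta$), I fix $B\supseteq D_0$ of measure $b$, apply the mirror lemma to obtain $|C([0,1],B)|=\theta\ge a$, and pick $A\subseteq C([0,1],B)$ of measure $a$ containing $[0,1]\setminus B$ (which is possible since $a\ge 1-b$, and since $[0,1]\setminus B\subseteq C([0,1],B)$ by non-wastefulness). Player I's IC against reporting $[0,1]$ then forces $|C(A,B)|\ge|A|=a$, so $|C(A,B)|=a$ and $|D(A,B)|=1-a$ by non-wastefulness, matching $f_\theta$. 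The sub-case $b\le 1-\theta$ is handled by swapping the roles of the players.

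The main obstacle is the sub-case $a>\theta$ and $b>1-\theta$, where neither player receives his full demand and the goal is $|C(A,B)|=\theta$, $|D(A,B)|=1-\theta$. Here my plan is a two-stage construction: first pick $A\supseteq C_0$ of measure $a$ and invoke the lemma to get $|D(A,[0,1])|=1-\theta$; set $D':=D(A,[0,1])$. Then define $B:=D'\cup B'$ with $B'\subseteq [0,1]\setminus D' = C(A,[0,1])$ of measure $b-(1-\theta)$, so that $|B|=b$, $B\supseteq D'$, and $A\cup B=[0,1]$ (since $[0,1]\setminus A\subseteq D'$). Player II's IC in both directions against the demand $[0,1]$ then pins $|D(A,B)|$ between $|D'\cap B|=1-\theta$ and $|D(A,[0,1])|=1-\theta$, and non-wastefulness yields $|C(A,B)|=\theta$. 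The delicate point is that $B$ must be built from the $F$-dependent set $D(A,[0,1])$ rather than from $D_0$ itself, because the second-stage IC argument needs the containment $B\supseteq D(A,[0,1])$, and for a generic $A\supseteq C_0$ the set $D(A,[0,1])$ need not coincide with $D_0$.
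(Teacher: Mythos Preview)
Your proposal is correct and follows essentially the same route as the paper: set $\theta=|C([0,1],[0,1])|$, declare $f=f_\theta$, and then realize each pair $(c_\theta(a,b),d_\theta(a,b))$ by a two-stage construction in which one player's demand is fixed first (relative to $C_0$ or $D_0$), incentive compatibility against the deviation to $[0,1]$ pins down the allocation size, and the second player's demand is then chosen to contain the allocation produced at the first stage. The only cosmetic differences are that you package the ``IC against $[0,1]$'' arguments into a separate lemma and, in the sub-case $a\le\theta$, you fix $B$ before $A$ whereas the paper fixes $A\subseteq\tilde{C}$ before $B$; the substance, including the delicate point you flag (that in the main sub-case $B$ must be built around $D(A,[0,1])$ rather than around $D_0$), is identical.
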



For mechanism $F(A,B)$ as mentioned, we will define mechanism $f(a,b)$ as
follows:
\begin{itemize}
  \item Find the division made by $F$ in case both players want the whole cake:\\
  $F([0,1],[0,1])=(\tilde{C},\tilde{D})$. Since $F$ is non-wasteful, $\tilde{C}\uplus\tilde{D}=[0,1]$.
  \item Denote $c(a,b)\equiv \min \left\{a,\max \left\{1-b,|\tilde{C}|\right\}\right\}$
  \item Denote $d(a,b)\equiv \min \left\{b,\max \left\{1-a,1-|\tilde{C}|\right\}\right\}$
  \item Give players I and II pieces $\left[ 0,c(a,b)\right]$ and $\left[1-d(a,b),1 \right]$ respectively.
\end{itemize}

\begin{lemma}
$f=(c(a,b),d(a,b))$ is non-wasteful and incentive-compatible.
\end{lemma}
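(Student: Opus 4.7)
The plan is to observe that the proposed $f$ is precisely the family member $f_{\theta}$ from the characterization theorem (Theorem~\ref{aligned_theorem}), with $\theta := |\tilde{C}|$. Because $\tilde{C}\subseteq[0,1]$ we have $\theta\in[0,1]$, so this is a legitimate parameter. The given definitions $c(a,b)=\min\{a,\max\{1-b,|\tilde{C}|\}\}$ and $d(a,b)=\min\{b,\max\{1-a,1-|\tilde{C}|\}\}$ are exactly the formulas appearing in Theorem~\ref{aligned_theorem} with $\theta$ replaced by $|\tilde{C}|$, and the allocated intervals $[0,c(a,b)]$ and $[1-d(a,b),1]$ match the allocations prescribed there.

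Once this identification is made, both required properties are immediate consequences of the ``opposite direction'' already established at the end of the proof of Theorem~\ref{aligned_theorem}. There it was shown that for every $\theta\in[0,1]$ the corresponding $f_{\theta}$ is deterministic, non-wasteful (because $c(a,b)+d(a,b)=\min\{a+b,1\}$, which combined with the alignment of demand-intervals and allocation-intervals to opposite endpoints of $[0,1]$ guarantees exact coverage of $A\cup B$ by disjoint pieces), and incentive-compatible (because each player either receives his full demand or else a ceiling that depends only on the other player's report, so there is no gain from misreporting).

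If instead one wished to verify the lemma directly, the non-wastefulness part would proceed by a brief case split on whether $a+b\le 1$ (in which case $c=a$, $d=b$, and the intervals $[0,a]$ and $[1-b,1]$ are automatically disjoint) or $a+b>1$ (in which case one checks $c+d=1$ by further splitting on whether $a\ge\theta$ and whether $b\ge 1-\theta$, noting that the subcase $a<\theta$ and $b<1-\theta$ is impossible under $a+b>1$). Incentive-compatibility would follow by fixing $b$, writing $c(a',b)=\min\{a',\mu(b)\}$ with $\mu(b)=\max\{1-b,\theta\}$ independent of $a'$, and observing that player I's utility $\min\{c(a',b),a\}/a$ is maximized at $a'=a$: any report $a'>a$ gives $c(a',b)\le\mu(b)$, whose intersection with $[0,a]$ is no better than truthful, and any report $a'<a$ weakly decreases $c(a',b)$ by the non-decreasing nature of $\min\{a',\mu(b)\}$ in $a'$.

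The main (and only) step is therefore the identification with $f_{\theta}$, which is immediate by inspection of the formulas; the lemma reduces to a direct corollary of Theorem~\ref{aligned_theorem} and presents no substantive obstacle.
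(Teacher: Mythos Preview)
Your proposal is correct and follows essentially the same approach as the paper: identify $f$ with the family member $f_{\theta}$ for $\theta=|\tilde{C}|\in[0,1]$ and invoke the ``opposite direction'' of Theorem~\ref{aligned_theorem} to conclude non-wastefulness and incentive-compatibility. The paper's proof is just the terse version of your first two paragraphs; your optional direct verification is extra but not needed.
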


\begin{proof}
$f$ receives an aligned players' input and divides the cake into aligned pieces.
The size $|\tilde{C}|$ is between $0$ and $1$ (similar to $\theta$). The sizes
of the pieces that the players receive ($c(a,b)$ and $d(a,b)$) match the
family of mechanisms that was mentioned in the section about the aligned model,
for $\theta=|\tilde{C}|$.
Therefore, $f$ is, in fact, the mechanism $f_{|\tilde{C}|}$ from that family.
We already know that for aligned players' valuation function (as in
this case), mechanisms from that family are non-wasteful and
incentive-compatible.
\end{proof}

\begin{restatable}{lemma}{reductionB}
For $F=(C(A,B),D(A,B))$ and $f=(c(a,b),d(a,b))$ as defined,
for all $a,b$ there exists $A,B$ such that $|A|=a$, $|B|=b$, $c(a,b)=|C(A,B)|$
and $d(a,b)=|D(A,B)|$, and furthermore whenever $a+b \ge 1$ we have that $A \cup
B=[0,1]$.
\end{restatable}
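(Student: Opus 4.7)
My plan is to build the witness sets $A,B$ out of the baseline allocation $(\tilde{C},\tilde{D})=F([0,1],[0,1])$ (with $\theta:=|\tilde{C}|$) and then pin down $F(A,B)$ by iterated applications of incentive-compatibility. The case $a+b\le 1$ is immediate: pick any disjoint $A,B$ of sizes $a,b$; non-wastefulness of $F$ forces $C(A,B)=A$ and $D(A,B)=B$, which matches $c(a,b)=a$ and $d(a,b)=b$ since $\max\{1-b,\theta\}\ge 1-b\ge a$ in this regime.

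For $a+b\ge 1$ the min-max formula for $(c,d)$ splits into three regions, which I handle with tailored constructions. Region (i) $a\ge\theta$ and $b\ge 1-\theta$: take $A=\tilde{C}\cup X$ and $B=\tilde{D}\cup Y$ with $X\subseteq\tilde{D}$ of measure $a-\theta$ and $Y\subseteq\tilde{C}$ of measure $b-(1-\theta)$. Region (ii) $a<\theta$: take $A\subseteq\tilde{C}$ of measure $a$ and $B=(\tilde{C}\setminus A)\cup\tilde{D}\cup Z$ with $Z\subseteq A$ of measure $a+b-1$. Region (iii) $b<1-\theta$ is symmetric to (ii). In each construction $|A|=a$, $|B|=b$ and $A\cup B=[0,1]$.

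Two ``anchor'' identities then drive the analysis. The first, $F(\tilde{C},[0,1])=(\tilde{C},\tilde{D})$ and symmetrically $F([0,1],\tilde{D})=(\tilde{C},\tilde{D})$, follows from a single IC comparison: player~I's lie ``$[0,1]$'' already delivers all of $\tilde{C}$ (utility $1$), so his truthful allocation $C(\tilde{C},[0,1])$ must also have utility $1$ for him, and combined with $C(\tilde{C},[0,1])\subseteq\tilde{C}$ this forces equality. The second anchor, $F(\tilde{C},B)=(\tilde{C},\tilde{D})$ for every $B\supseteq\tilde{D}$, comes from two-sided IC on player~II: comparing real $B$ with lie $[0,1]$ gives $|D(\tilde{C},B)|\ge 1-\theta$, and comparing real $[0,1]$ with lie $B$ gives $|D(\tilde{C},B)|\le 1-\theta$; non-wastefulness together with $C(\tilde{C},B)\subseteq\tilde{C}$ then pins the whole allocation down. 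A symmetric argument yields $F(A,\tilde{D})=(\tilde{C},\tilde{D})$ whenever $A\supseteq\tilde{C}$.

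With these anchors in hand, each region becomes a short IC calculation. In (i), IC for player~I comparing real $A$ with lie $\tilde{C}$ (opponent fixed at $B$) gives $|C(A,B)|\ge|C(\tilde{C},B)|=\theta$; the symmetric comparison on player~II gives $|D(A,B)|\ge 1-\theta$; since $|C|+|D|=|A\cup B|=1$, both inequalities are equalities. In (ii), $A\subseteq\tilde{C}$ makes $V_A(\tilde{C})=1$, so the same IC comparison against the anchor $F(\tilde{C},B)=(\tilde{C},\tilde{D})$ forces $V_A(C(A,B))=1$, whence $C(A,B)=A$ and $|D(A,B)|=1-a$ by non-wastefulness. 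Region (iii) is symmetric. The main obstacle is choosing the witness sets in each region so that the anchor-based IC inequalities actually resolve into tight bounds; once the decomposition is in place, IC paired with non-wastefulness completes the proof mechanically.
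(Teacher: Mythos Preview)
Your argument is correct and follows essentially the same strategy as the paper: start from the baseline $(\tilde{C},\tilde{D})=F([0,1],[0,1])$, split the case $a+b\ge 1$ into the three regions determined by which of $a\ge\theta$, $b\ge 1-\theta$ hold, and use incentive-compatibility (in both directions) together with non-wastefulness to pin down $|C(A,B)|$ and $|D(A,B)|$ for suitably chosen witnesses. The only real difference is packaging: the paper changes the two players' demands sequentially (so its $B_2$ is required to contain the intermediate allocation $D_1$), whereas you first isolate the ``anchor'' identities $F(\tilde{C},B)=F(A,\tilde{D})=(\tilde{C},\tilde{D})$ and then build $A,B$ directly from $\tilde{C},\tilde{D}$, which is slightly cleaner but not substantively different.
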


This lemma is proved in the appendix.
\medskip 

Concluding, the mechanism $f$ meets the demands of the theorem, completing the
proof.
\medskip 

Say we choose $F$ and examine the matching mechanism $f$, as described. Denote
the inputs of mechanism $f$ as $a,b$. If $a+b\leq 1$, choosing
$A=[0,a],B=[1-b,1]$ as inputs for $F$ will result in each of the players 
receiving all of his demand, causing an identical 4-tuple of ratios for the
two mechanisms:
$\left(\frac{a}{a+b},\frac{b}{a+b},\frac{a}{a+b},\frac{b}{a+b}\right)$.
If $a+b>1$, the union of the players' demands is of size 1. The theorem shows
that there are $A,B$ such that $|A|=a,|B|=b,|C(A,B)|=c(a,b),|D(A,B)|=d(a,b)$
and furthermore, $|A\cup B|=1$.
Therefore, the ratio 4-tuples obtained by $f(a,b)$ and $F(A,B)$ (for the
specific $A$ and $B$ suggested in the theorem) are identical: $(a,b,c,d)$.



\section{The Price of Truthfulness}
As was mentioned in Remark \ref{half_best_theta},
it is possible to show that for any $0\leq\theta\leq1 ~,~
\theta\neq\frac{1}{2}$, the competitive ratio of the social welfare of the
mechanism $f_{\theta}$ (marked as $\eta_{f_{\theta}}$), is $<(8-4\sqrt{3})^{-1}$.
Using the two reductions from the last section, we can conclude
that there isn't a non-wasteful, incentive-compatible,
deterministic mechanism for the general model with higher $\eta$.
Moreover, Since $\eta_{f_{\frac{1}{2}}}=(8-4\sqrt{3})^{-1}$, there
is an incentive-compatible, non-wasteful deterministic mechanism $F$ for general
model\footnote{This mechanism is
$F$ that is generated by reduction \ref{second_reduction} using the
mechanism $f_{\frac{1}{2}}$.} which achieves $\eta _F =
(8-4\sqrt{3})^{-1}\approx 0.93$.

We will now prove a stronger claim - this upper bound still holds even if
the mechanism can be wasteful or randomized, as long as the valuation
functions are of the same form which we defined in the general model (actually,
the exact proof is even stronger and also works even if the players are limited
only to the aligned model's valuation functions).

\begin{theorem}(Price of Truthfulness)
Any deterministic or randomized incentive-compatible mechanism for
cake cutting for two-players in the general model, achieves at most a
$(8-4\sqrt{3})^{-1}\approx 0.93$ fraction of the optimal welfare for some
player valuations.
\end{theorem}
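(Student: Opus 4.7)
The plan is to reduce the randomized, possibly wasteful case to an averaging argument over two mirror-symmetric worst-case inputs, using incentive-compatibility to bound the sum of welfares on the two.

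First, I would model any randomized IC mechanism as a probability distribution over deterministic IC mechanisms, so by linearity of expectation it suffices to exhibit a single distribution $\mu$ over valuation profiles such that every deterministic IC mechanism $F$ satisfies $\mathbb{E}_{\mu}[SW_{F}/SW_{\max}] \le (8-4\sqrt{3})^{-1}$. A Yao-style argument then yields a profile in $\mathrm{supp}(\mu)$ on which the randomized mechanism achieves expected ratio at most the bound.

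Second, I would take $\mu$ to be uniform on the two mirror-image aligned profiles $I_1=([0,1],[2-\sqrt{3},1])$ and $I_2=([0,\sqrt{3}-1],[0,1])$. Both have $SW_{\max}=3-\sqrt{3}$, and these are precisely the instances witnessing the worst case of the mechanism $f_{1/2}$ from Theorem~\ref{aligned_theorem}. For a deterministic IC mechanism $F$ (possibly wasteful), write $(C_i,D_i)=F(I_i)$ and let $x_i=|C_i\cap A_i|$, $y_i=|D_i\cap B_i|$; only these quantities contribute to the players' utilities.

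Third, I would exploit IC by considering deviations in which a player reports one of the other demand sets $\{[0,1],[2-\sqrt{3},1],[0,\sqrt{3}-1]\}$, using also the auxiliary profile $([0,1],[0,1])$ as an intermediate comparison point. The resulting system of IC inequalities on $x_1,y_1,x_2,y_2$ mirrors the one that drove the aligned-model characterization: the monotonicity and Lipschitz-type relations force the effective allocation sizes to satisfy the same bounds as in the non-wasteful aligned family. From this, I would conclude $SW(I_1)+SW(I_2)=x_1+y_1/(\sqrt{3}-1)+x_2/(\sqrt{3}-1)+y_2\le 2(3-\sqrt{3})(8-4\sqrt{3})^{-1}=(3+\sqrt{3})/2$, so by averaging at least one of the two profiles has welfare ratio at most $(8-4\sqrt{3})^{-1}$.

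The main obstacle will be handling wasteful mechanisms, since the characterization of Section~3 assumes non-wastefulness and the reductions of Section~4 do not transport wastefulness across models. The key point is that welfare depends only on the intersected sizes $x_i,y_i$, and IC applied to a wasteful mechanism still constrains these intersected sizes: a lie that redirects cake between a player's demand set and its complement is invisible for utility purposes except through how much of the demand set is allocated, so the same monotonicity/Lipschitz derivation used in the proof of Theorem~\ref{aligned_theorem} applies verbatim to $x_i,y_i$. Nailing down this "projection to the demand region" reduction carefully, and checking that arbitrary (not necessarily aligned) lies do not break the argument, is the crux of the direct proof; once it is in place, the worst-case arithmetic is the same as for $f_{1/2}$ and the theorem follows.
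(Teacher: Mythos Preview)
Your averaging idea over the two mirror profiles $I_1,I_2$ (linked via the auxiliary profile $([0,1],[0,1])$) does give the bound, but the Yao step as you state it is wrong: a randomized mechanism that is incentive-compatible \emph{in expectation} is not in general a distribution over deterministic IC mechanisms, so you cannot reduce to ``every deterministic IC mechanism against a fixed input distribution.'' This detour is also unnecessary. Replace $x_i,y_i$ and the sizes $p,q$ at $([0,1],[0,1])$ by their expectations; IC-in-expectation gives $\mathbb{E}[y_1]\le q$ and $\mathbb{E}[x_2]\le p$, disjointness gives $p+q\le 1$, $\mathbb{E}[x_1]+\mathbb{E}[y_1]\le 1$, $\mathbb{E}[x_2]+\mathbb{E}[y_2]\le 1$, and a two-line computation then yields $\mathbb{E}[SW(I_1)]+\mathbb{E}[SW(I_2)]\le (3+\sqrt{3})/2$, hence one of the two ratios is at most $(8-4\sqrt{3})^{-1}$. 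So drop Yao and argue directly on expectations.

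Compared with the paper: the paper's direct proof is shorter and uses only one bad profile. It fixes $([0,1],[0,1])$, takes WLOG $p\le q$ so that $p\le\tfrac12$, and then considers just $A=[0,1-\tau]$, $B=[0,1]$; the single IC constraint ``a player I who truly wants the whole cake must not gain by reporting $[0,1-\tau]$'' forces $p'\le p\le\tfrac12$, and optimizing over $\tau$ hits the bound at $\tau=2-\sqrt{3}$. Your symmetrization trades that WLOG for an averaging over $I_1,I_2$; it is equivalent in strength but buys nothing extra. You also over-plan the third step: no ``monotonicity and Lipschitz-type relations'' from the aligned characterization are needed, and handling wastefulness requires no ``projection to the demand region'' machinery---the trivial observations $|C\cap A|\le|C|$ and $|C|+|D|\le 1$ are all that is used in either proof.
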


\begin {proof}
Say each of the two players' real demand is the whole cake: $[0,1]$. We will
denote by $p$ and $q$ the expected sizes of the pieces of cake that the
mechanism gives player I and player II in that case, respectively.
W.l.o.g we assume that player I received the (weakly) smaller piece, $p\leq q$
and since $p+q\leq1$, $p\leq\frac{1}{2}$.

Now, we will examine what happens if player I's demand is $A=[0,1-\tau]$ for
some $0\leq\tau\leq 1$, and player II's demand remains unchanged.
Intuitively, in order to maximize the social welfare, as a player
demands a smaller piece, the mechanism needs to give him a larger allocation (in
case he really asks for his real demand).
However, from incentive-compatibility, the size of the piece that player I will
receive can not be greater than $p$ (if it did, it would have been better for
him to lie in the previous case and ask for the smaller piece instead of the
whole cake). We denote by $p',q'$ the expected size of the pieces that the
players receive in that case.

Since $\eta _{F} = \min _{A,B} \frac{\mathbb{E}\left[SW_F(A,B)\right]}{SW_{max}(A,B)}$, and we are
checking only a specific subset of inputs (of
the form $A=[0,1-\tau],B=[0,1]$), we can say that for each of those $A,B$:


  $$
  \eta_F\leq
  \frac{\mathbb{E}\left[SW_F(A,B)\right]}{SW_{max}(A,B)}=
  \frac{\frac{p'}{1-\tau}+\frac{q'}{1}}{\frac{1-\tau}{1-\tau}+\frac{\tau}{1}}\underset{\underset{q'\leq 1-p'}{p'\leq p,1-\tau\leq 1}}{\leq}
  $$
  $$
  \frac{\frac{p}{1-\tau}+1-p}{1+\tau}=
  \frac{1+p\left(\frac{1}{1-\tau}-1\right)}{1+\tau}\underset{\underset{p\leq\frac{1}{2}}{\frac{1}{1-\tau}-1>0}}{\leq}
  \frac{\frac{\frac{1}{2}}{1-\tau}+\frac{1}{2}}{1+\tau}
  $$

The minimal value of this expression is $(8-4\sqrt{3})^{-1}$ at $\tau=2-\sqrt{3}$.

Therefore, $\eta_F\leq (8-4\sqrt{3})^{-1}$
\end{proof}

We remark again -- there exists a mechanism $F$, in the general model, which
achieves the bound for a mechanism in that model, $(8-4\sqrt{3})^{-1}\approx
0.93$. This is the price of truthfulness.

On top of being incentive compatible, this mechanism is also deterministic,
non-wasteful and envy-free. 


\bibliography{ICCakeCut}
\bibliographystyle{plain}

\if 0

\fi



\if 0
\begin{contact}
Avishay Maya\\
School of Computer Science and Engineering\\
The Hebrew University of Jerusalem\\
\email{avishay.maya@mail.huji.ac.il}
\end{contact}

\begin{contact}
Noam Nisan\\
School of Computer Science and Engineering\\
The Hebrew University of Jerusalem\\
\email{noam@cs.huji.ac.il}
\end{contact}
\fi


\appendix
\section{Social Welfare in the Aligned Model}

\thetaHalf*
\begin{proof}
In the cases in which $a+b\leq1$, as stated earlier, each of the
players gets all of his demand. Therefor, $SW_{f_{\frac{1}{2}}}=SW_{max}=2$, and
$\eta_{f_{\frac{1}{2}}}$ is at its maximal possible value ($1$).

In case that $a+b>1$:

$SW_{f_{\frac{1}{2}}}(a,b)=\frac{c(a,b)}{a}+\frac{1-c(a,b)}{b}=\frac{1}{a}[\min\{a,\max\{1-b,\frac{1}{2}\}\}]+\frac{1}{b}[1-\min\{a,\max\{1-b,\frac{1}{2}\}\}]=$

\[
  	\begin{dcases*}
    	1 + \frac{1-a}{b}  			& for $a < \max\{1-b,\frac{1}{2}\}$\\
    	\frac{1-b}{a} + 1  			& for $\frac{1}{2} < 1-b < a$\\
    	\frac{1}{2a}+\frac{1}{2b}	& for $1-b < \frac{1}{2} < a$
    \end{dcases*}
\]

\[
SW_{max}(a,b)=
    \begin{dcases*}
    	1 + \frac{1-a}{b}  & for $a < b$\\
    	\frac{1-b}{a} + 1  & for $b < a$
    \end{dcases*}
\]

Now, we will check the value of
$\eta_{f_{\frac{1}{2}}}(a,b)=\frac{SW_{f_{\frac{1}{2}}}(a,b)}{SW_{max}(a,b)}$
for each of the possible orders (from lowest to highest) of
$a,1-a,b,1-b,\frac{1}{2}$.
It is not possible that $a$ and $1-a$ would both be smaller or larger than
$\frac{1}{2}$ (because $0\leq a \leq1$). The same is true also for $b$
and $1-b$.
Moreover, $a$ and $b$, can not be both smaller than $\frac{1}{2}$,
because their sum should be $>1$.
Finally, from the same reason, it is also not possible that $b\leq 1-a$
or $a \leq 1-b$.
Therefore, there are only 4 possible permutations:

\begin{center}
    \begin{tabular}{ |c|c|c|c|c|}
    \hline
    \multicolumn{1}{|c|}{\#} &
    \multicolumn{1}{|c|}{order (low to high)} &
    \multicolumn{1}{|c|}{$\min\{a,\max\{1-b,\frac{1}{2}\}\}$} &
    \multicolumn{1}{|c|}{$SW_{f_{\frac{1}{2}}}$} &
    \multicolumn{1}{|c|}{$SW_{max}$} \\ [0.2cm]\hline
    &&&&\\[-0.3cm]
    1&$1-a,b,\frac{1}{2},1-b,a$ & $1-b$    		& $\frac{1-b}{a} + 1$          & $1+\frac{1-b}{a}$\\ [0.2cm]\hline
    &&&&\\[-0.3cm]
    2&$1-a,1-b,\frac{1}{2},b,a$ & $\frac{1}{2}$	& $\frac{1}{2a}+\frac{1}{2b}$  & $1+\frac{1-b}{a}$\\ [0.2cm]\hline
    &&&&\\[-0.3cm]
    3&$1-b,a,\frac{1}{2},1-a,b$ & $a$     		& $1 + \frac{1-a}{b}$          & $1+\frac{1-a}{b}$\\ [0.2cm]\hline
    &&&&\\[-0.3cm]
    4&$1-b,1-a,\frac{1}{2},a,b$ & $\frac{1}{2}$	& $\frac{1}{2a}+\frac{1}{2b}$  & $1+\frac{1-a}{b}$\\ [0.2cm]\hline
    \end{tabular}
\end{center}

In cases 1 and 3, $\eta_{f_{\frac{1}{2}}}(a,b)$ gets to its maximum (1).

In case 2: $\eta_{f_{\frac{1}{2}}}(a,b)=\frac{a+b}{2b(1-b+a)}$
and the minimum is at $\tilde{a}=1,\tilde{b}=\sqrt{3}-1$

In case 4: $\eta_{f_{\frac{1}{2}}}(a,b)=\frac{a+b}{2a(1-a+b)}$
and the minimum is at $\tilde{a}=\sqrt{3}-1,\tilde{b}=1$

In both cases 2 and 4, The value of
$\eta_{f_{\frac{1}{2}}}(a,b)$ at its minima is
$\frac{1}{8-4\sqrt{3}}\approx 0.93$, and this is $\eta_{f_{\frac{1}{2}}}$
\end{proof}

\section{Reduction From the Aligned to the General Model}
\reductionA*
\begin{proof}
Without loss of generality, we will show that player I can't benefit from
lying. Say that player I demanded some $A_1$, which differs from his real will
$A$.
It is possible to divide the symmetric difference between $A$ and $A_1$ into
four disjoint subsets:
\begin{itemize}
  \item $\Delta_1\equiv (A_1\setminus A) \cap B$ 
  \item $\Delta_2\equiv (A\setminus A_1) \cap B$ 
  \item $\Delta_3\equiv (A_1\setminus A) \cap \bar{B}$ 
  \item $\Delta_4\equiv (A\setminus A_1) \cap \bar{B}$ 
\end{itemize}

Note that
$\Delta_1$ and $\Delta_3$ are the subsets that are not in $A$ but were
added to $A_1$.
$\Delta_2$ and $\Delta_4$ are the subsets of $A$ that are not in $A_1$.

In this proof, we will show (one after the other) that zeroing
the size of those $\Delta$'s, can only raise the profit of player I (for every $A$ and
$A_1$). Therefore player I has no incentive to lie.

The total size of the pieces that player I would get from demanding $A_1$ is:
$C(A_1,B)=\min\{|A_1|,\max\{|A_1\setminus B|,\theta\cdot|A_1 \cup B|\}\}$. We
should note that although the piece that player I would receive from
mechanism $F$ must be included in $A_1$, it is not necessarily included in $A$.

We will define $A_2$ as $A_1$ without the subset that was added $\Delta_1$,
$A_2\equiv A_1\setminus\Delta_1$ and show that player I can only benefit from
demanding $A_2$ instead of $A_1$.
The size of the piece that player II would receive after demanding $A_2$
is: $C(A_2,B)=\min\{|A_2|,\max\{|A_2\setminus B|,\theta\cdot|A_2 \cup B|\}\}$.
If the minimum of this expression is $|A_2|$, then Player I gets all $A_2$,
and since the difference between $A_1$ and $A_2$ is only $\Delta_1$, which
player I doesn't really want ($A_1\cap A \subseteq A_2$), player I couldn't do
better by asking $A_1$.

If the minimum of $C(A_2,B)$ is one of the two elements in the maximum
argument, this element is also the minimal value for the demand $A_1$,
because those two elements has the same size in $C(A_2,B)$ and in $C(A_1,B)$, and
the third expression ($|A_1|$) has larger value than $|A_2|$.

Regarding the position of the allocation itself, from non-wastefulness, the
allocation must contain $A_1\cap \bar{B}$ in case the player demanded $A_1$ and
$A_2\cap \bar{B}$ in case the player demanded $A_2$. Those subsets are identical
$A_1\cap \bar{B}=A_2\cap \bar{B}$. In case player I demanded $A_2$, The rest of
the allocation would have to be from  $A\cap A_1 \cap B \subseteq A$ (his real
will). Therefore, also in this case, player I couldn't do
better by asking $A_1$ instead of $A_2$.

In either case, player I can't lose by removing $\Delta_1$ from his demand, and
asking for $A_2$.

We will define $A_3$ as $A_2$, only with $|\Delta_2|=0$ (adding back to the
demand the subset $\Delta_2$ that was removed from $A$) $A_3\equiv
A_2\cup\Delta_2$.
All of the three elements in the expression for $C(A_3,B)$ can only be larger
than in $C(A_2,B)$, therefore $C(A_2,B)\leq C(A_3,B)$. Since the intervals
that aren't joint with $B$ are the same between $A_2$ and $A_3$, and from
non-wastefulness Player I would get all of them, the extra intervals that player
I would get by asking $A_3$, would have to be from $A_3 \cap B$, but since
$|\Delta_1|=0$ there aren't any intervals in $A_3 \cap B$ which Player I doesn't
really want.
Therefore he can only benefit from reducing $|\Delta_2|$ to 0.

We will define $A_4$ as $A_3$, only with $|\Delta_3|=0$: $A_4\equiv
A_3\setminus\Delta_3$.
Each of the three expressions in $C(A_4,B)$ is between
$\theta\cdot|\Delta_3|$ and $|\Delta_3|$, and is smaller than in $C(A_3,B)$.
Therefore by demanding $A_4$ instead of $A_3$, player I would get a smaller piece of cake.
But the piece of cake that the player would get from demanding $A_3$ would
necessarily include the subset $\Delta_3$ (because this subset is not
included in player II demand and the non-wastefulness of $F$). This means that
out of the piece that player I will receive, there is a subset with total size
of $|\Delta_3|$ which he doesn't want. There aren't such intervals in $A_4$,
Therefore the whole piece $C(A_4,B)$ would be from areas that the player
wants, and he can only benefit (between 0 and $(1-\theta)\cdot|\Delta_4|$)
by demanding $A_4$ instead of $A_3$.

For the last stage, we will show that $A_5$, which is defined as $A_4$, only
with $|\Delta_4|=0$, meaning $A_5\equiv A_4\cup\Delta_4$ (adding back the
subset that was removed in $\Delta_4$ to the demand), is better for player I than $A_4$.
$C(A_5,B)=\min\{|A_5|,\max\{|A_5\setminus B|,\theta\cdot|A_5 \cup
B|\}\}=\min\{|A_4|+|\Delta_4|,\max\{|A_4\setminus B|+|\Delta_4|,\theta\cdot(|A_4
\cup B|+|\Delta_4|)\}\}$.
Each one of the three elements in this expression is larger than the elements in
$C(A_4,B)$, therefore $C(A_4,B)\leq C(A_5,B)$.
Since $|\Delta_1|=0$ and $|\Delta_3|=0$ there aren't any intervals in $A_5$ that
player I doesn't really wants, therefore Player I can only benefit from zeroing
$\Delta_4$.

At this point we should notice that $A_5=A$, therefore for any $A$ and $A_1$
player I can only benefit from telling the truth, and mechanism $F$ is IC.
\end{proof}

\section{Reduction From the General to the Aligned Model}
\reductionB*
\begin{proof}

For $a+b \leq 1$, The mechanism $f$ would result in $c(a,b)=a$ and
$d(a,b)=b$ according to its definition.
We can match those $a,b$ the pair $A=[0,a],B=[1-b,1]$.
Since $a+b \leq 1$, those interval are disjoint and from the
non-wastefulness of $F$, $F(A,B)$ would result $C=[0,a],D=[1-b,1]$,
therefore the condition is satisfied.

Otherwise, $a+b > 1$. Since from the non-wastefulness of $F$,
$|\tilde{C}|+|\tilde{D}|=1$, at least one of the following inequalities has to
be fulfilled: $a>|\tilde{C}|,b>|\tilde{D}|$.

In case both inequalities are true, mechanism $f$ would result
$c(a,b)=|\tilde{C}|,d(a,b)=|\tilde{D}|$.
We will now look at mechanism $F$. We know that in the case of two players each
of whom demands the whole interval $[0,1]$, player I would get the piece
$\tilde{C}$ and player II the piece $\tilde{D}$.

Say player II wants the whole cake, and player I only wants a smaller
interval -- the interval $A_1$, which is of size $a$ and contains all of
$\tilde{C}$ (we know that $a>|\tilde{C}|$). Mechanism $F$ would have to
give player I a piece $C_1\subset A_1$ which is of size $|\tilde{C}|$. Otherwise
-- if $|C_1|<|\tilde{C}|$, this player would prefer to lie and demand the whole
cake. If $|C_1|>|\tilde{C}|$, a player I who wants the whole cake would prefer
to lie and ask for $A_1$. In both cases it contradicts $F$'s
incentive-compatibility.

Since $F$ is non-wasteful and player II wants the
whole cake, he would receive everything that player I didn't received. Namely,
the piece $D_1=[0,1]\setminus C_1$, with size of:
$|D_1|=1-|C_1|=1-|\tilde{C}|=|\tilde{D}|$.
We should notice that specifically, since $C_1\subset A_1$, $[0,1]\setminus A_1
\subset D_1$.

Now, say that player I still wants the piece $A_1$ as mentioned, and player II
wants a piece $B_2$ of size $b$ which contains all of $D_1$ (we know that
$b>|\tilde{D}|=|D_1|$). Mechanism $F$ would have to give player II a piece
$D_2\subset B_2$, sized $|\tilde{D}|$ (follows from incentive-compatibility of
$F$, from the same reasons as above). $B_2\setminus A_1 \subset D_2$ (from
non-wastefulness -- player II must receive everything that only he demanded).
We also know that $[0,1]\setminus A_1 \subset D_1\setminus A_1 \subset
B_2\setminus A_1\subset D_2$, Meaning $[0,1]\setminus A_1 = A_1^C
\subset D_2$.
As for the piece $C_2$, which player I would receive after the last change of
player II demand, he would get everything that is in his demand $A_1$ and
wasn't allocated to player II. $|C_2|=|A_1\setminus D_2|=|A_1\cup
D_2|-|D_2|\underset{A_1^C \subset D_2}{=}1-|D_2|=|\tilde{C}|$.

Therefore, If player I demands the set $A_1$ ($|A_1|=a$) and player II
demands the set $B_2$ ($|B_2|=b$), $F$ would give player I the piece $C_2$
of size $|\tilde{C}|$, and player II the piece $D_2$ of size $|\tilde{D}|$.
Those sizes matches mechanism $f(a,b)$.

The last possible case is when only one of the two inequalities is true (without
loss of generality, $a\leq|\tilde{C}|,b>|\tilde{D}|$).
Mechanism $f$, as we defined it above, would result
$c(a,b)=a~,~d(a,b)=1-a$ (since
$b>|\tilde{D}|\rightarrow|\tilde{C}|>1-b$, and $a\leq|\tilde{C}|$).

Again, we will start by examining $F$ in a situation in which both players want
the whole cake. Player I receive the piece $\tilde{C}$ and player II the piece
$\tilde{D}$.

Say an alternate player I only wants a piece $A_1$ of size $a$, such that
$A_1\subseteq\tilde{C}$ (we know that $a\leq|\tilde{C}|$), while player II
still wants the whole cake. As in the previous case, from
incentive-compatibility of $F$, player I should receive the piece $C_1=A_1$.
From non-wastefulness, player II would receive $D_1=[0,1]\setminus A_1$,
$|D_1|=1-a$.

In a situation that player I demands the piece $A_1$ and
player II wants a piece $B_2$, such that $|B_2|=b$ and $D_1=[0,1]\setminus
A_1 \subset B_2$ (possible because $|A_1|=a$ and $b>1-a$). From
incentive-compatibility, $|D_2|=|D_1|=1-a$. Since $|A_1\cup B_2|=1$, from
non-wastefulness $|C_2|=1-|D_2|=a$.

Therefore, If player I demands the set $A_1$ ($|A_1|=a$) and player II
demands the set $B_2$ ($|B_2|=b$), $F$ would give player I the piece $C_2$
of size $a$, and player II the piece $D_2$ of size $(1-a)$. Those sizes
matches mechanism $f(a,b)$.

Note that in all cases where $a+b\geq 1$, The demands of the players $A$ and $B$
were defined such that $[0,1]\setminus A \subseteq B$, meaning $A\cup B=[0,1]$.
\end{proof}
\end{document}